\theoremstyle{plain}
\newtheorem{thm}{\protect\theoremname}
\theoremstyle{plain}
\newtheorem{prop}{\protect\propositionname}
\theoremstyle{plain}
\newtheorem{lem}{\protect\lemmaname}
\newcolumntype{P}[1]{>{\centering\arraybackslash}p{#1}}
\newcolumntype{Y}{>{\raggedleft\arraybackslash}X}
\newcolumntype{C}[1]{>{\centering\arraybackslash}p{#1}}
\newcolumntype{J}[1]{>{\justify\arraybackslash}p{#1}}
\newcolumntype{R}[1]{>{\RaggedLeft\arraybackslash}p{#1}}
\newcolumntype{Q}[1]{>{\columncolor{Gray}\RaggedLeft\arraybackslash}p{#1}}
\newcolumntype{L}[1]{>{\RaggedRight\arraybackslash}p{#1}}
\newcolumntype{G}{@{\extracolsep{0.5cm}}l@{\extracolsep{0pt}}}%
\providecommand{\lemmaname}{Lemma}
\providecommand{\propositionname}{Proposition}
\providecommand{\theoremname}{Theorem}
\begin{document}
\title{A Robust Similarity Estimator \thanks{The author is grateful for helpful comments from Peter Reinhard Hansen,
Jun Yu, Han Chen, Yijie Fei, Marine Carrasco, Benoit Perron, Ren\'{e}
Garcia, Victoria Zinde-Walsh, Yifan Li, and seminar participants at
the University of Macau (October, 2025), the Virtual Time Series Seminar
(December, 2025), University of Montreal (May, 2026). The author also
thanks participants of the 2025 NUS Quantitative Finance Conference,
2026 annual SoFiE meeting, 2026 Asian Meeting of Econometric Society
for valuable discussions and feedback. \protect \linebreak{}
The author gratefully acknowledges financial support from the Social
Sciences and Humanities Research Council of Canada (SSHRC) through
an Insight Grant (No. 435-2026-1701).}{\normalsize\emph{\medskip{}
}}}
\author{\textbf{Ilya Archakov} \bigskip{}
\\
{\normalsize\emph{York University; Department of Economics\medskip{}
}}}
\date{{\normalsize\emph{\date{}}}}
\maketitle
\begin{abstract}
We analyze a measure of statistical association based on the similarity
of the outcomes of random variables, in both sign and magnitude. Motivated
by its attractive properties, we propose a class of estimators for
the linear correlation coefficient, a sample-average and a maximum-likelihood
version, that operate directly on the Fisher scale and possess
a robust sampling distribution that is invariant over the entire class
of elliptical distributions. Under scale homogeneity, the finite-sample
distribution of the estimator is available in exact form, facilitating
robust inference for correlations even in small samples. The similarity
measure extends naturally to higher dimensions, where it admits an
interpretation as an indicator of joint similarity among multiple
random variables. As empirical applications, we construct robust confidence
intervals for financial correlations using intraday returns and develop
a new specification of a multivariate GARCH model with robust correlation
dynamics.

\bigskip{}
\end{abstract}
{\small\textit{Keywords:}}{\small{} Correlation, Robust Estimation,
Robust Inference, Fisher Transformation, Matrix Logarithm, High Frequency
Data, Multivariate GARCH}{\small\par}

\noindent{\small\textit{JEL Classification:}}{\small{} C13, C30, C38,
C58 \newpage}{\small\par}

\section{Introduction}

Measuring statistical association and dependence between random variables
is a broad topic in statistics with a long history. The most popular
and widely used measure of association is the linear correlation coefficient
(denoted by $\rho$), which naturally emerges in many statistical
and econometric frameworks such as linear regression, multivariate
GARCH models, or network analysis. In the financial econometrics literature,
correlations play a critical role in risk management, hedging, optimal
portfolio allocation, the analysis of systemic risk, etc.

Both estimation and inference for correlations are practically challenging,
especially in settings where the sample size is limited. For example,
under sufficiently mild assumptions, the well-known sample correlation
estimator ($\hat{\rho}$) is a consistent estimator of $\rho$. Although
it is an efficient estimator when observations are independent and
normally distributed, its variance is inflated in the presence of
heavy-tailed data, and the estimator remains notoriously sensitive
to outliers. In addition, the finite sample distribution of $\hat{\rho}$
converges very slowly to its asymptotic limit, with both the shape
and the spread often depending strongly on the properties of the underlying
data. As a result, potential distortions in both the central tendency
and the sampling distribution of $\hat{\rho}$ undermine robust estimation
of the correlation coefficient and complicate statistical inference.

In a series of seminal papers, Ronald A. Fisher proposed a continuous
transformation (now known as the Fisher transformation) for $\hat{\rho}$
that offers several advantages, including variance stabilization and
a symmetric, nearly Gaussian sampling distribution for the transformed
sample correlation, even in relatively small samples (see \citet{Fisher_1921},
\citet{Hotelling_1953}). The rapid convergence to the asymptotic
distribution has made the Fisher transformation popular for conducting
statistical inference about $\rho$, even with limited data. The estimation
and inference, however, remain fragile, as the transformation does
not provide robustness to outliers, and the sampling variance remains
inflated under heavy-tailed data distributions.

The lack of robustness of the sample correlation is commonly addressed
through the use of alternative correlation measures that are, by construction,
insensitive to extreme observations. Notable examples include the
Quadrant estimator (\citet{Greiner_1909}, \citet{Kendall_1949},
\citet{Blomqvist_1950}) and the Kendall rank correlation coefficient
(\citet{esscher1924method}, \citet{Kendall_1938}). These measures
are based on the relative number of concordant and discordant pairs
of observations and are therefore intrinsically robust to outliers.
Under mild assumptions, they can be transformed into consistent (though
not efficient) estimators of $\rho$ (see, for example, \citet{Croux2010}).
Despite their robustness and consistency, estimators based on rank
correlation measures typically have sampling distributions that remain
sensitive to the underlying data generating process. This sensitivity
poses a substantial challenge for inference on correlations in applied
empirical analysis, where the true distribution of the data is typically
unknown.

A related strand of the robustness literature builds inference on
the directions of the observations alone. Since the direction, or
spatial sign, of an elliptical random vector has a distribution that
does not depend on the radial component, sign-based procedures are
naturally insensitive to extreme magnitudes and heavy tails. Prominent
examples include the distribution-free M-estimator of scatter of \citet{tyler1987distribution},
sign and rank covariance matrices (\citet{visuri2000sign}, \citet{oja2010multivariate}),
the spatial sign correlation of \citet{durre2015spatial}, and semiparametrically
efficient rank-based inference for shape parameters (\citet{hallin2006semiparametrically1}).
The similarity measure studied in this paper belongs to this family,
as it depends on an observation only through its direction. Its distinctive
feature, however, is that, on the Fisher scale, it admits an \emph{exact}
finite-sample distribution that is invariant over the entire elliptical
class. Consequently, it supports exact robust inference for correlations
in small samples, rather than only asymptotically distribution-free
point estimation.

In this paper, we develop a class of estimators of statistical association
between two random variables that are consistent for a specific functional
of the covariance, or scatter, matrix of the underlying variables.
The estimators build on a measure of similarity between two variables,
inspired by the works of \citet{Thorndike_1905} and \citet{Fisher_1919},
that accounts for agreement in both sign and magnitude. Under elliptically
distributed data with homogeneous scales, the similarity variable
follows, exactly and for every member of the elliptical family, a
hyperbolic secant distribution centered at the Fisher transformation
of $\rho$. This result motivates two estimators that operate directly
on the Fisher scale. The first is a sample-average similarity estimator,
whose finite-sample distribution is available in semi-explicit form
via the characteristic function and is invariant over the class of
elliptical distributions with arbitrary kurtosis; this enables not
only robust estimation but also exact inference for correlations,
such as interval estimation with a prescribed coverage probability,
even in very small samples. The second is a maximum-likelihood similarity
estimator, which solves a strictly concave likelihood problem and
reduces the asymptotic variance from $\frac{\pi^{2}}{4}\approx2.467$
to $2$, the inverse Fisher information of the hyperbolic secant family,
while preserving the same distributional invariance. Naturally, the
efficiency of both estimators is lower than that of the sample correlation
estimator, reflecting the price paid for intrinsic robustness; the
resulting confidence intervals are wider on average, but remain valid
under outliers and extremely heavy-tailed data.

In the more realistic scenario where the scales are not homogeneous,
the similarity estimators remain meaningful: they consistently estimate
the Fisher transformation of the coefficient of resemblance, an alternative
measure of association that accounts jointly for the correlation and
for the similarity of scales, and that provides a lower bound on the
magnitude of $\rho$. To recover the correlation coefficient itself,
we develop a two-stage procedure in which the variables are first
re-scaled toward homogeneous dispersion using the median of the observed
log-ratios -- a robust, purely angular statistic available in closed
form -- after which the (maximum-likelihood) similarity estimator
is applied to the re-scaled data. We prove that the resulting scale-equalized
estimator is strongly consistent and asymptotically normal, with the
same limit variance as in the homogeneous case, for every elliptical
generator.

We propose a natural generalization of the similarity measure to the
multivariate setting involving an arbitrary number of variables. The
resulting generalized similarity measure captures the relative variation
of the observed data along a similarity direction, defined by the
vector of ones, and can be interpreted as a measure of joint similarity
across multiple variables. This measure naturally inherits robustness
against outliers and nests the bivariate similarity measure as a special
case. For elliptically distributed data with homogeneous dispersion
and identical pairwise correlations, the multivariate similarity estimator
becomes a consistent estimator of the (transformed) equicorrelation
parameter, with an exact and invariant sampling distribution, thus
permitting a correlation-based interpretation analogous to the bivariate
case.

Estimating financial correlations from high-frequency data represents
a particularly promising area for applications. While intraday data
provide a rich source of observations, offering the potential for
efficient estimation and accurate inference, the analysis of such
data is accompanied by multiple econometric challenges. For example,
large instantaneous price movements, or jumps, which are prevalent
in financial markets, generate extreme (outlying) return observations
that often induce a downward bias in estimated correlations. In addition,
at sufficiently high frequencies, the estimation of realized covariances
and correlations may be affected by various forms of market microstructure
noise (see \citet{Hansen_Lunde_2006}, \citet{BandiRussellRES08}),
data asynchronicity and the Epps effect (see \citet{Ren`o2003}),
as well as other adverse artifacts.

The literature on covariance and correlation estimation using high-frequency
data proposes a wide range of estimators with varying degrees of robustness
to the aforementioned challenges. In practice, realized correlations
are most commonly obtained from multivariate volatility estimators
by rescaling estimated covariances to the corresponding linear correlations
(see \citet{BNS:2004}, \citet{AitSahaliaFanXiu2010}, \citet{BNHLS:2011},
\citet{HansenHorelLundeArchakov}, among others). While many of these
methods are explicitly designed to handle various microstructural
effects, they nevertheless remain vulnerable to price jumps and other
outliers. A common remedy is the application of truncation or thresholding
techniques to filter out extreme returns (see, for example, \citet{Mancini2001},
\citet{AndersenDobrevSchaumburg2012}). An alternative approach is
to employ robust correlation measures, such as the Kendall and Quadrant
correlation estimators, along with their modifications and adaptations
for high-frequency data (see \citet{VanderElst_Veredas_2015}, \citet{Hansen_Luo_2023},
among others).

The intrinsic robustness of the similarity estimator to extreme observations
and heavy-tailed data makes it a promising tool for correlation estimation
in high-frequency data settings. A key feature that distinguishes
the similarity estimator from existing alternatives is its ability
to deliver robust interval estimates for correlations, enabled by
the availability of a robust and invariant sampling distribution.
We provide an empirical illustration by constructing robust confidence
intervals for daily correlations of $21$ liquid U.S. stocks from
seven sectors over the sixteen-year period between 2005 and 2020.
The analysis covers more than $800,000$ pairwise daily correlations,
each estimated at thirteen sampling frequencies ranging from $2$
seconds to $30$ minutes. For moderate frequencies ($\Delta\ge30$
sec), the estimated robust intervals show remarkably strong agreement
with robust correlation estimates based on the Kendall coefficient,
with the proportion of matches approaching one, supporting the reliable
performance of the similarity estimator in high-frequency applications.
At the same time, the corresponding proportion for traditional realized
correlations is substantially lower and decreases steadily as the
sampling frequency rises, consistent with the sensitivity of the sample
correlation to price jumps and other outliers.

We note that, in our empirical analysis, the estimator is applied
while ignoring several recognized features of intraday data, such
as time-varying volatilities and correlations, which are inconsistent
with the assumption of independence of sample observations. Accordingly,
we leave a careful adaptation of the proposed estimator to high-frequency
settings for future research.

Another area for applications is the modeling of correlation dynamics
for vectors of asset returns. The traditional approach builds on the
extensive literature on multivariate GARCH or stochastic volatility
models, in which the conditional correlation process is assumed, either
directly or indirectly, to evolve over time (see the respective surveys
in \citet{Bauwens_Laurent_Rombouts_2006} and \citet{Asai_McAleer_Yu_2006}).
The models are typically applied to relatively low-frequency data,
such as daily or weekly asset returns. Key challenges of this class
of models include preserving positive definiteness of the conditional
correlation structures, as well as the rapidly increasing computational
complexity as the number of assets grows.

We propose a new class of multivariate GARCH models that employ the
similarity measure to model correlation dynamics. Our approach follows
the Dynamic Conditional Correlation (DCC) framework introduced in
\citet{Engle_2002}, in which the conditional correlation process
is modeled in isolation from the conditional volatilities. We develop
a parsimonious model specification in which an arbitrary number of
assets can be accommodated without an increase in computational complexity.
To this end, we impose the equicorrelation assumption, similar to
\citet{Engle_Kelly_2012}, and specify the dynamics of the conditional
correlation parameter under the matrix logarithmic transformation.
This formulation eliminates the need for additional constraints to
guarantee positive definiteness, in contrast to the classical DCC
framework; in the bivariate case, for example, the correlation dynamics
is specified directly on the Fisher scale, ensuring positive definiteness
by construction. The robust similarity measure, computed on standardized
returns, is then used as an observation-driven update for the dynamic
correlation parameter, and the specification accommodates an asymmetric
(leverage-type) response of correlations to the prevailing direction
of market returns. An important feature of the proposed model is the
intrinsic robustness of the estimated correlation process to extreme
observations and fat-tailed distributions, which are common in financial
returns. Because the similarity signal depends only on the direction
of the standardized returns, the proposed correlation dynamics remains
well defined across the entire elliptical family, including infinite-variance
members for which classical DCC-type recursions have no population
counterpart.

We apply the proposed model to daily returns on $49$ U.S. industry
portfolios over a sample period spanning almost six decades ($1969$
to $2026$). The estimated conditional equicorrelation index reveals
pronounced low-frequency cyclical dynamics, with the common correlation
level ranging from about $0.2$ to $0.8$ and major global downturns
marking the turning points of long-term correlation cycles. Compared
with the standard DECO specification, the new robust index responds
more modestly to short-lived episodes of co-directional extreme returns,
such as the market crashes of $1987$ and $1997$ and the outbreak
of COVID-19, pointing to distinctive informational content in the
robust correlation index. A thorough assessment of its statistical
and economic performance is left for future research.

\section{\label{sec:Measure-of-Similarity}A Measure of Similarity for Random
Variables}

Let $x_{1}$ and $x_{2}$ be two real-valued random variables. Assume
additionally that $x_{1}$ and $x_{2}$ have zero mean (or zero location
parameters, $\mu_{1}=\mu_{2}=0$), and the norm of the random vector
$x=(x_{1},x_{2})^{\prime}$ is positive with probability one. We consider
the following variable,
\begin{equation}
r=\frac{2x_{1}x_{2}}{x_{1}^{2}+x_{2}^{2}},\label{eq:res}
\end{equation}
which can be interpreted as a measure of statistical similarity between
$x_{1}$ and $x_{2}$. For empirical analysis, this quantity was introduced
in \citet{Thorndike_1905} to measure statistical association between
a pair of twins with respect to a set of considered characteristics,
and was originally named \textit{resemblance}. In that study, this
term was used as a synonym for the \textit{coefficient of correlation}. 

Indeed, multiple aspects allow one to consider the Thorndike's resemblance
variable, $r$, as a measure of correlation. The quantity is confined
within the fixed interval, $r\in[-1,1]$, and the value of $r$ is
higher when $x_{1}$ and $x_{2}$ are more similar in magnitude, while
sharing the same sign. The magnitude of $r$ is larger when magnitudes
of $x_{1}$ and $x_{2}$ are more similar to each other, while the
sign of $r$ is positive (negative) if $x_{1}$ and $x_{2}$ have
the same (opposite) directions.

Variable $r$ has a particularly simple form when the random vector
is represented in polar coordinates ($x_{1}=s\cos\theta$ and $x_{2}=s\sin\theta$).
In this case, $r=\sin2\theta$, so the resemblance measure does not
depend on the vector length, $s$. Intuitively, it implies that $r$
depends only on the angle between the observed vector and coordinate
axes and ignores the magnitude of observations. This points to intrinsic
insensitivity of the resemblance measure, $r$, to the presence of
outliers in the data. 

If we additionally assume that vector $x=(x_{1},x_{2})^{\prime}$
has finite second moments, the Pearson correlation between $x_{1}$
and $x_{2}$ is defined as
\begin{equation}
\rho=\frac{\mathbb{E}(x_{1}x_{2})}{\sqrt{\mathbb{E}(x_{1}^{2})\mathbb{E}(x_{2})^{2}}}.\label{eq: Pearson_corr}
\end{equation}
The Pearson correlation is the most popular measure of statistical
association that arises naturally in multiple econometric frameworks,
from a simple linear regression to complex statistical learning algorithms.
The Pearson correlation is often considered as the default benchmark
for measuring dependence in empirical analysis. Although $\rho$ is
able to capture the precise association between random variables only
when the true relationship is linear, it can still provide a reasonable
approximation when the underlying dependence is monotonic and not
heavily non-linear.

While the correlation coefficient $\rho$ is bounded between $-1$
and $1$ by construction, it is sometimes convenient to work with
an unconstrained correlation measure and this can be achieved by means
of a suitable transformation. The most prominent example is the Fisher
transformation defined as,
\[
\phi_{\rho}=\frac{1}{2}\log\Bigl(\frac{1+\rho}{1-\rho}\Bigl),
\]
for $\rho\ensuremath{\in(-1,1)}$. The Fisher transformation represents
a strictly monotone transformation of $\rho$ onto the set of real
numbers, such that $\phi_{\rho}\in\mathbb{R}$. The transformation
was proposed by Ronald A. Fisher in a series of seminal papers (see
\citet{Fisher_1915}, \citet{Fisher_1921}), where he also demonstrated
that it improves the distributional properties of the sample correlation
coefficient.

In what follows, we will refer to the Thorndike's resemblance measure,
$r$, under the Fisher transformation, 
\begin{equation}
\phi_{r}=\frac{1}{2}\log\Bigl(\frac{1+r}{1-r}\Bigl)=\frac{1}{2}\log\frac{(x_{1}+x_{2})^{2}}{(x_{1}-x_{2})^{2}},\label{eq:phi_r}
\end{equation}
as to the \textit{measure of similarity}, or the similarity variable.
In contrast to $r$, the similarity measure $\phi_{r}$ has an unrestricted
range, $\phi_{r}\in\mathbb{R}$, and its magnitude increases as the
values of $x_{1}$ and $x_{2}$ become more similar. The sign of $\phi_{r}$
is positive (negative) when the two observations have the same (opposite)
signs. Figure \ref{fig:contour_gamma} illustrates the magnitudes
of $\phi_{r}$ as a function of $x_{1}$ and $x_{2}$, where warmer
(red) colors indicate higher values of $\phi_{r}$ and cooler (blue)
colors indicate lower values. We also note that, similar to $r$,
$\phi_{r}$ depends only on the angular coordinate of the vector $x$,
which underlies its intrinsic robustness to observations with extreme
magnitudes.

\begin{figure}[h]
\begin{centering}
\includegraphics[scale=0.6]{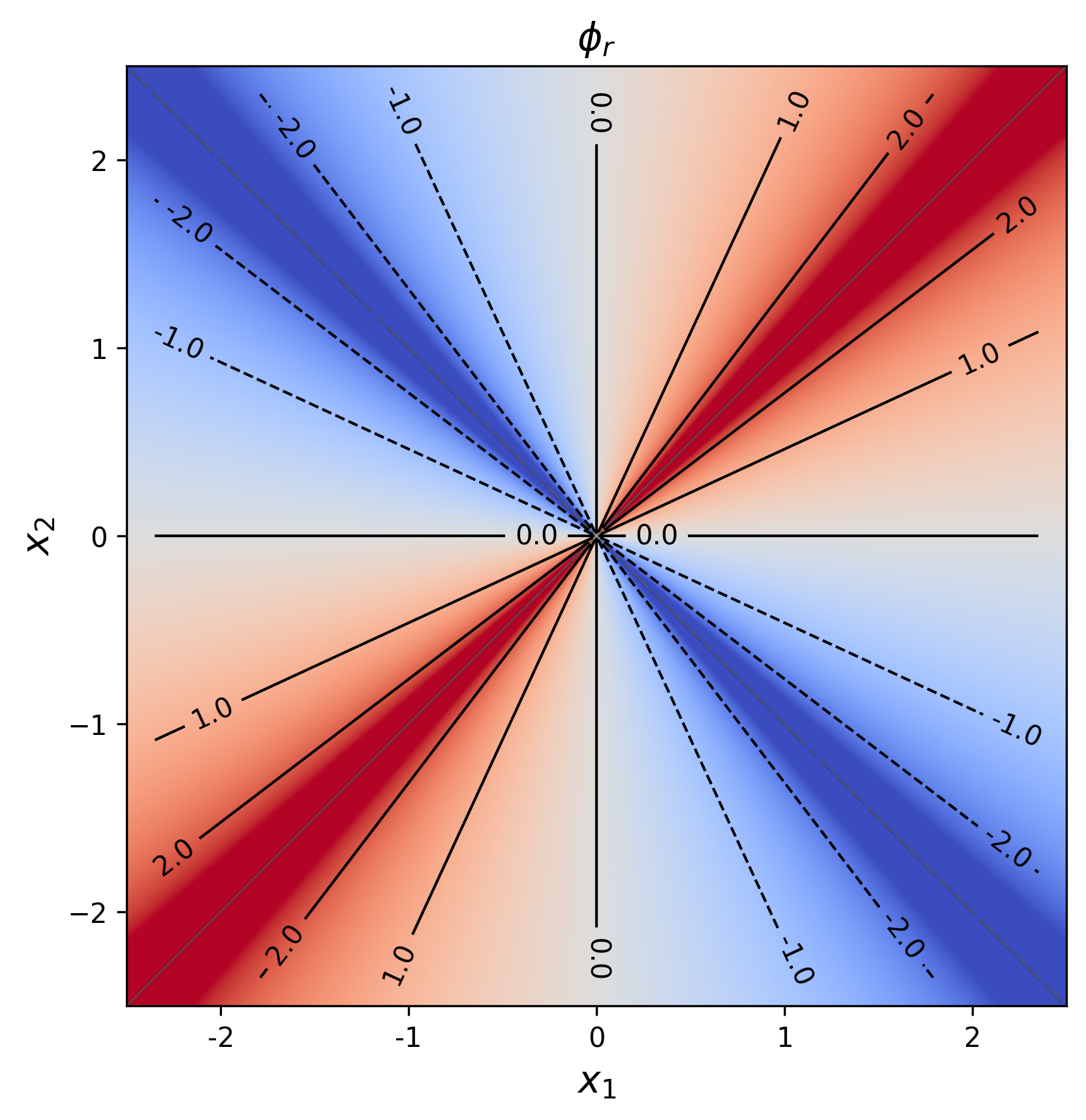}
\par\end{centering}
\caption{\footnotesize\label{fig:contour_gamma} A heatmap plot for $\phi_{r}$
as a function of $x_{1}$ and $x_{2}$. Areas with the red color indicate
higher values of $\phi_{r}$, while areas with the blue color indicate
lower (negative) values of $\phi_{r}$. The white lines corresponding
to $x_{1}=x_{2}$ ($r=1$) and $x_{1}=-x_{2}$ ($r=-1$) represent
the loci where the function is undefined.}
\end{figure}

An important special case arises when the random vector $x$ follows
a bivariate elliptical distribution, for which the dependence structure
is inherently linear. For this case, let the scatter matrix of $x$
be a positive definite matrix given by
\begin{equation}
\Sigma=\left(\begin{array}{cc}
\sigma_{1}^{2} & \sigma_{12}\\
\sigma_{12} & \sigma_{2}^{2}
\end{array}\right),\label{eq:Sigma_2x2_hetero}
\end{equation}
where the marginal scale parameters of $x_{1}$ and $x_{2}$ are denoted
by $\sigma_{1}^{2}$ and $\sigma_{2}^{2}$, respectively. The linear
correlation parameter is defined by $\rho=\frac{\sigma_{12}}{\sqrt{\sigma_{1}^{2}\sigma_{2}^{2}}}\in(-1,1)$
and can be considered as a natural generalization of the Pearson correlation
for the elliptical family. Note that $\rho$ is well defined even
if second moments of $x$ do not exist (e.g., as for the multivariate
Cauchy distribution), but otherwise is equivalent to the classical
Pearson correlation in \eqref{eq: Pearson_corr}.\footnote{For zero location elliptical vectors, the linear correlation $\rho$
can also be equivalently defined via the transformed quadrant probabilities
(see \citet{Sheppard_1899}, \citet{Greiner_1909}, etc.)}

Assume additionally that the scales are homogeneous, such that $\sigma_{1}=\sigma_{2}=\sigma$.
Under this assumption, $\phi_{r}$ and $\phi_{\rho}$ are elegantly
connected. This result was originally formulated in \citet{Fisher_1919}
for the Gaussian case, and below we provide an extension of the original
result to the entire class of elliptical distributions.
\begin{thm}[R. Fisher, 1919]
\label{prop:T1} Assume that $x=(x_{1},x_{2})^{\prime}$ is a bivariate
random vector which follows some elliptical distribution with zero
location and positive-definite scatter matrix $\Sigma$ with homogeneous
scale parameters, and let $\rho$ denote the linear correlation between
$x_{1}$ and $x_{2}$. Denote the measure of resemblance by $r=\frac{2x_{1}x_{2}}{x_{1}^{2}+x_{2}^{2}}$,
and the corresponding Fisher transformation by $\phi_{r}=\frac{1}{2}\log\Bigl(\frac{1+r}{1-r}\Bigl)$.
Then, $\phi_{r}$ is a random variable with the probability density
function
\[
f(\phi_{r})=\frac{1}{\pi}\text{sech}(\phi_{r}-\phi_{\rho}),
\]
where $\phi_{\rho}=\frac{1}{2}\log\Bigl(\frac{1+\rho}{1-\rho}\Bigl)$
is the Fisher transformation of $\rho$.
\end{thm}
Theorem \ref{prop:T1} implies that $\phi_{r}$ is symmetrically distributed,
according to the hyperbolic secant distribution, around the Fisher
transformation of $\rho$. The distribution of $\phi_{r}-\phi_{\rho}$
depends neither on the scale parameter, $\sigma^{2}$, nor on the
underlying correlation coefficient, $\rho$, nor on the elliptical
generator. Therefore, under the assumptions of the proposition, the
similarity variable $\phi_{r}$ provides an unbiased and robust signal
of the latent correlation level (on the Fisher scale) with stable
sampling properties, and thus emerges as an attractive statistical
tool for correlation estimation.

\section{\label{sec:Realized-Similarity-Estimator}The Similarity Estimator }

Let a random sample be given, $\{x_{t}\}_{t=1}^{T}$, where $x_{t}=(x_{1,t},x_{2,t})^{\prime}$
are independent observations from some bivariate elliptical distribution
with zero location and positive definite scatter matrix as in \eqref{eq:Sigma_2x2_hetero}
with the linear correlation coefficient, $\rho\in(-1,1)$. Probably
the most popular estimator of the correlation is the sample correlation
estimator given by
\[
\hat{\rho}=\frac{\sum_{t=1}^{T}x_{1,t}x_{2,t}}{\sqrt{\Bigl(\sum_{t=1}^{T}x_{1,t}^{2}\Bigl)\Bigl(\sum_{t=1}^{T}x_{2,t}^{2}\Bigl)}},
\]
where we impose $\mathbb{E}(x)=0$, if the expectation exists. Under
the mild assumption of finite first and second moments, the sample
correlation, $\hat{\rho}$, is a consistent estimator of the population
linear correlation, and is asymptotically normal with $\sqrt{T}(\hat{\rho}-\rho)\overset{d}{\rightarrow}\mathcal{N}\Bigl(0,V_{\rho}\Bigl)$.
For relatively small $T$, however, the sampling properties of $\hat{\rho}$
are often poorly approximated by the asymptotic results, especially
in the presence of sufficiently heavy-tailed data. The inference is
additionally complicated by the fact that the asymptotic variance
$V_{\rho}$ generally depends on the unknown value of $\rho$. For
example, for the Gaussian case, $V_{\rho}=(1-\rho^{2})^{2}$; see
\citet{Fisher_1915}.

The Fisher transformation is particularly useful in improving sampling
properties of $\hat{\rho}$. We denote the Fisher transformation of
$\hat{\rho}$ by $\phi_{\hat{\rho}}$. Thus, for normally distributed
observations, the asymptotic distribution of $\phi_{\hat{\rho}}$
is $\sqrt{T}(\phi_{\hat{\rho}}-\phi_{\rho})\overset{d}{\rightarrow}\mathcal{N}\Bigl(0,1\Bigl)$,
and the asymptotic variance is independent of $\rho$. More importantly,
the Fisher transformation offers multiple advantages in finite samples.
In particular, when the data is close to be normally distributed,
it provides the variance stabilization for the sampling distribution
of $\phi_{\hat{\rho}}$ with making it symmetric and nearly Gaussian
even for very small samples (see \citet{Fisher_1921}, \citet{Hotelling_1953},
etc.). These properties often motivate to analyze the sample correlation
coefficient in the Fisher scale when conducting inference for correlations. 

The critical drawback of the sample correlation estimator, that may
compromise both estimation and inference, is its notorious sensitivity
to outliers and, more generally, to extreme observations. A popular
example of a robust correlation measure is the Kendall rank correlation
coefficient, or Kendall\textquoteright s tau coefficient (\citet{esscher1924method},
\citet{Kendall_1938}), given by
\[
\hat{\tau}=\frac{2}{T(T-1)}\sum_{i<j}\text{sign}(x_{1,i}-x_{1,j})\text{sign}(x_{2,i}-x_{2,j}).
\]
The measure captures the strength of monotonic dependence between
two variables by quantifying the relative frequency of sign-concordant/discordant
pairs of observations. For elliptical distributions, it can be transformed
to match the Pearson correlation via the Greiner's equality which
links the correlation coefficient with the quadrant probabilities,
$\rho=\sin\Bigl[\frac{\pi}{2}\mathbb{E}(\hat{\tau})\Bigl]$, see \citet{Greiner_1909}.
A similar alternative is the class of quadrant estimators of correlation
which are based on the sample proportion of sign-concordant observations
(see \citet{Sheppard_1899}, \citet{Kendall_1949}, \citet{Blomqvist_1950},
\citet{Hansen_Luo_2023}, etc.)

The results in Proposition \ref{prop:T1} motivate an alternative
estimator of statistical association that is based on the Fisher transformed
similarity, $\phi_{r}$. For each observation $x_{t}$, $t=1,...,T$,
we can construct the corresponding (local) empirical measure of similarity
$\phi_{r,t}$, as given in \eqref{eq:phi_r}. In case the scale parameters
of $x_{t}$ are identical, it follows from Theorem \ref{prop:T1}
that $\mathbb{E}\phi_{r,t}=\phi_{\rho}$. Therefore, $\phi_{r,t}$
is an unbiased and robust signal of the correlation coefficient, on
the Fisher scale, and this naturally motivates suggesting the following
moment-based estimator,
\begin{equation}
\hat{\gamma}=\frac{1}{T}\sum_{t=1}^{T}\phi_{r,t}=\frac{1}{2T}\sum_{t=1}^{T}\log\frac{(x_{1,t}+x_{2,t})^{2}}{(x_{1,t}-x_{2,t})^{2}}.\label{eq:realized_sim}
\end{equation}
In what follows, we will refer to $\hat{\gamma}$ as to the similarity
estimator. In the elliptical case with homogeneous variances, the
similarity estimator is i) a consistent and unbiased estimator of
the linear correlation coefficient (in the Fisher scale), ii) has
robust mean and robust sampling distribution that does not depend
on $\rho$, iii) the sampling distribution is available via the characteristic
function, for any $T$, allowing for exact inference on correlations.
In a more general scenario, when scales of $x_{1}$ and $x_{2}$ may
differ, the estimator $\hat{\gamma}$ can be treated as a standalone,
robust measure of statistical association between the two random variables,
and it retains an additional interpretation as a lower bound on the
correlation coefficient $\rho$.

\subsection{\label{subsec:Variance-Homogeneity:-Robust}Robust Estimation and
Inference for Correlations under Scale Homogeneity}

By Theorem \ref{prop:T1}, under scale homogeneity, $\phi_{r,t}$
are i.i.d. with the hyperbolic secant density centered at $\phi_{\rho}$,
which is symmetric around the center and has finite moments of all
orders. Hence, $\hat{\gamma}\overset{a.s.}{\rightarrow}\mathbb{E}\phi_{r,t}=\phi_{\rho}$
and 
\begin{equation}
\sqrt{T}(\hat{\gamma}-\phi_{\rho})\overset{d}{\rightarrow}\mathcal{N}\Bigl(0,\frac{\pi^{2}}{4}\Bigl),\label{eq:gamma_clt}
\end{equation}
as $\mathbb{E}\phi_{r,t}=\phi_{\rho}$ and $\mathbb{V}(\phi_{r,t})=\frac{\pi^{2}}{4}$.

The asymptotic distribution of $\hat{\gamma}$ does not depend on
the actual correlation coefficient $\rho$. In contrast to $\phi_{\hat{\rho}}$,
which represents the transformation of the sample correlation estimator
$\hat{\rho}$, the similarity estimator, $\hat{\gamma}$, directly
targets $\phi_{\rho}$ by estimating the correlation coefficient on
the Fisher scale. The asymptotic variance of $\hat{\gamma}$ is $\frac{\pi^{2}}{4}$,
and this value is, in general, larger than the asymptotic variance
of $\phi_{\hat{\rho}}$ in the unconstrained scenario. The efficiency
reduction is not surprising since $\hat{\gamma}$ incorporates only
information about the relative magnitudes of $(x_{1,t},x_{2,t})$
and their signs, but ignores information about the total magnitude
of $x_{t}$. A lower efficiency of $\hat{\gamma}$ is nonetheless
compensated by its robustness to outliers and stability of the sampling
distribution.

A remarkable feature of the similarity estimator in the considered
scenario is that not only the asymptotic distribution, but also the
finite-sample distribution of $\hat{\gamma}-\phi_{\rho}$ is invariant
over the entire class of elliptical distributions of $x_{t}$. Furthermore,
the characteristic function of $\phi_{r,t}$ is available and allows
to recover the exact sampling distribution of the similarity estimator
for any finite $T$. Specifically, the distribution $f(\phi_{r,t})$
provided in Proposition \ref{prop:T1} implies that the characteristic
function for $\phi_{r}-\phi_{\rho}$ is given by $\varphi_{\phi_{r}}(u)=\text{sech}\bigl(\frac{\pi}{2}u\bigl)$.
Denote the standardized similarity estimator by $z_{\hat{\gamma}}=\frac{2\sqrt{T}}{\pi}(\hat{\gamma}-\phi_{\rho})$.
Then, the characteristic function of $z_{\hat{\gamma}}$ can be written
as 
\begin{equation}
\varphi_{z}(u)=\Pi_{t=1}^{T}\varphi_{\phi_{r,t}}\Bigl(\frac{2u}{\pi\sqrt{T}}\Bigl)=\Biggl[\text{sech}\Bigl(\frac{u}{\sqrt{T}}\Bigl)\Biggl]^{T},\label{eq:cf}
\end{equation}
for $u\in\mathbb{R}$. Therefore, the sampling distribution of $\hat{\gamma}$
becomes available in semi-explicit form (via the characteristic function)
for any finite $T$, and this allows to conduct exact inference for
the estimated correlation parameter. The sampling distribution of
$z_{\hat{\gamma}}$ is symmetric with a positive excess kurtosis for
any sample size. It converges to the standard normal distribution
very quickly as $T$ increases. This is illustrated in Figure \ref{fig:pdf1}
and in Table \ref{tab:quantiles_table}, where the quantiles of the
finite-sample distribution are reported for a range of $T$. 

\begin{figure}[h]
\begin{centering}
\includegraphics[scale=0.8]{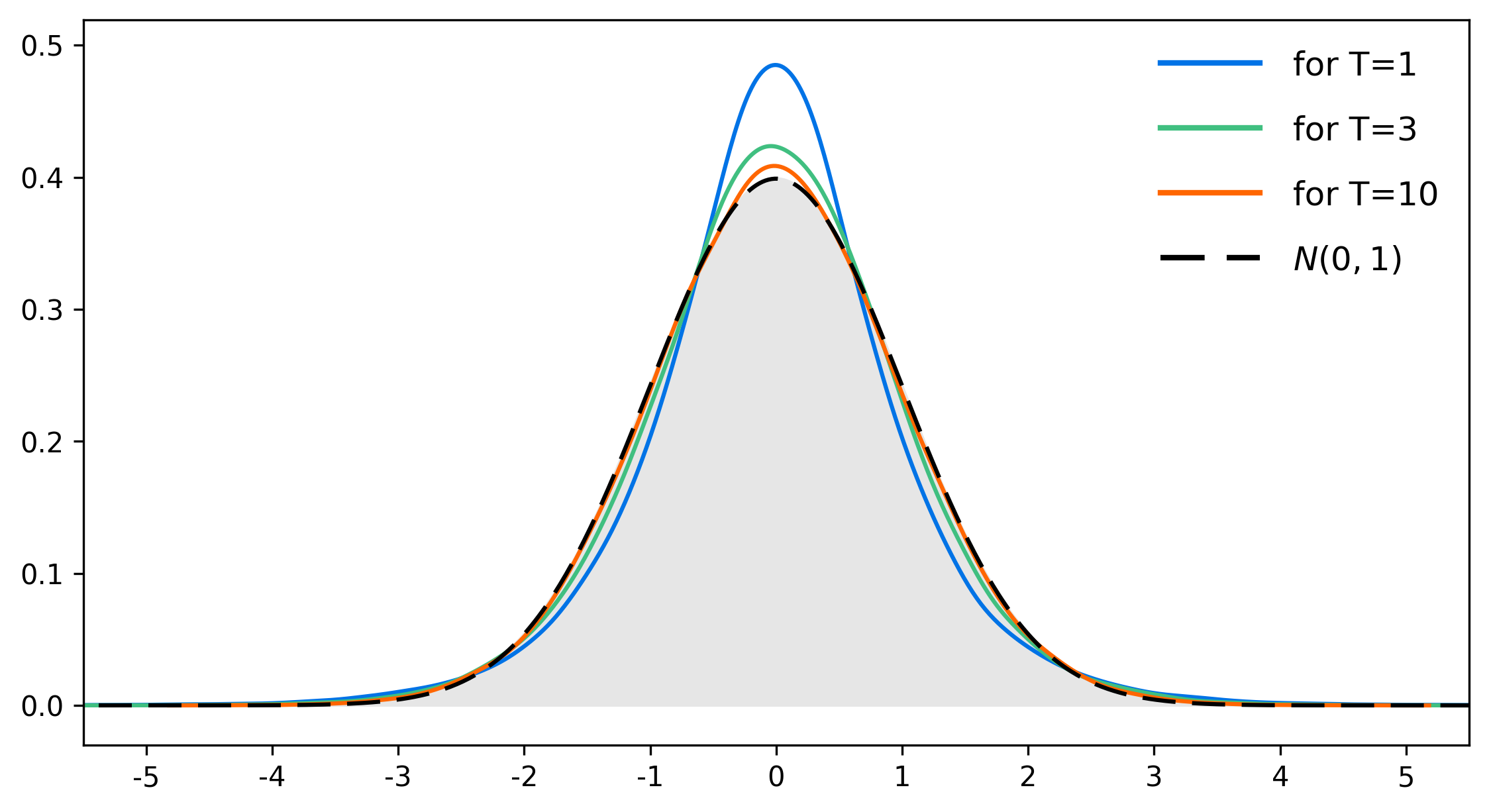}
\par\end{centering}
\caption{\footnotesize\label{fig:pdf1} The probability density functions
of $z_{\hat{\gamma}}=\frac{\sqrt{T}(\hat{\gamma}-\phi_{\rho})}{\pi/2}$
for $T=1,3,10$ (colored lines) and the standard normal probability
density (dashed line) which is the distribution limit of $z_{\hat{\gamma}}$
for $T\rightarrow\infty$.}
\end{figure}

The similarity estimator $\hat{\gamma}$ can be used as a robust and
consistent estimator of the linear correlation coefficient once the
inverse Fisher transformation is applied, $\phi^{-1}(\hat{\gamma})$,
where
\[
\phi^{-1}(\gamma)=\frac{e^{2\gamma}-1}{e^{2\gamma}+1}=\tanh(\gamma),
\]
which is a hyperbolic tangent function. We note that although $\hat{\gamma}$
is an unbiased estimator of $\phi_{\rho}$, the inverse transformation,
$\phi^{-1}(\hat{\gamma})$, is not an unbiased estimator of $\rho$
because $\mathbb{E}\phi^{-1}(\hat{\gamma})\neq\rho$, in general,
for $T>1$. However, with an exact finite-sample distribution of $\hat{\gamma}$
and due to monotonicity of the Fisher transformation, $\hat{\gamma}$
can be used for \textit{interval estimation} of $\rho$ by providing
exact confidence intervals for any sample size $T$.

Under the scale homogeneity assumption, another interesting feature
of the similarity estimator is related to the matrix logarithm transformation.
Assuming $x_{t}$ has a positive definite scatter matrix $\Sigma$
with the common scale parameter $\sigma^{2}$ and correlation $\rho$
, $\lambda_{+}=\sigma^{2}(1+\rho)$ and $\lambda_{-}=\sigma^{2}(1-\rho)$
are the two eigenvalues of $\Sigma$ with the corresponding eigenvectors
$q_{+}=\frac{1}{\sqrt{2}}\iota_{2}$ and $q_{-}=\frac{1}{\sqrt{2}}\iota_{2}^{\perp}$,
where $\iota_{2}=(1,1)^{\prime}$ and $\iota_{2}^{\perp}=(1,-1)^{\prime}$.
For $n=2$, the matrix logarithm transformation of $\Sigma$ has an
explicit analytic expression and reads 
\[
\log\Sigma=\left(\begin{array}{cc}
\frac{1}{2}\log(\lambda_{+}\cdot\lambda_{-}) & \mathinner{\color{purple}\frac{1}{2}}{\color{purple}\log\Bigl(\frac{\lambda_{+}}{\lambda_{-}}}\mathopen{\color{purple}\Bigl)}\\
\mathinner{\color{purple}\frac{1}{2}}{\color{purple}\log\Bigl(\frac{\lambda_{+}}{\lambda_{-}}}\mathopen{\color{purple}\Bigl)} & \frac{1}{2}\log(\lambda_{+}\cdot\lambda_{-})
\end{array}\right),
\]
where the off-diagonal entry, representing the (half) log-condition
number of $\Sigma$, coincides with the Fisher transformation of the
correlation coefficient, $\phi_{\rho}=\frac{1}{2}\log\Bigl(\frac{\lambda_{+}}{\lambda_{-}}\Bigl)$.
Therefore, under the considered assumptions, $\hat{\gamma}=\frac{1}{2T}\sum_{t=1}^{T}\log\frac{(q_{+}^{\prime}x_{t})^{2}}{(q_{-}^{\prime}x_{t})^{2}}$
consistently estimates the off-diagonal element of $\log\Sigma$.
This suggests a promising direction for using $\hat{\gamma}$ to estimate
correlation matrices directly under the matrix logarithm transformation
which offers many convenient properties for correlation analysis (see
\citet{Archakov_Hansen_2021}). We further explore this idea in Section
\ref{subsec:Equicorrelation-Scenario}.

\subsection{\label{subsec:Maximum-Likelihood-Robust}Maximum-Likelihood Robust
Similarity Estimator }

The moment-based estimator $\hat{\gamma}$ in \eqref{eq:realized_sim}
does not efficiently employ the information about the distribution
of $\phi_{r}$ provided in Theorem \ref{prop:T1}. In contrast, the
maximum likelihood (ML) estimation exploits this information in its
entirety and provides an asymptotically more efficient estimator of
$\phi_{\rho}$. For a sample of $\phi_{r,t}$, $t=1,...,T$, the log-likelihood
function is given by
\[
\ell(\gamma)=-T\log\pi+\sum_{t=1}^{T}\log\text{sech}(\phi_{r,t}-\gamma),
\]
and the corresponding score and Hessian are
\[
\nabla_{\gamma}\ell=\sum_{t=1}^{T}\text{tanh}(\phi_{r,t}-\gamma),\qquad\text{and}\qquad\nabla_{\gamma}^{2}\ell=-\sum_{t=1}^{T}\text{sech}^{2}(\phi_{r,t}-\gamma),
\]
where $\nabla_{\gamma}^{2}\ell<0$. The natural candidate estimator
is a solution of the necessary condition, $\nabla_{\gamma}\ell=0$.
The following proposition formulates the maximum-likelihood robust
similarity estimator.
\begin{prop}
\label{prop:P1} For a random sample of bivariate elliptical vectors,
$x_{t}=(x_{1,t},x_{2,t})^{\prime}$, for $t=1,...,T$, with zero location
and positive-definite scatter matrix $\Sigma$ with homogeneous scales
and correlation parameter $\rho\in(-1,1)$, the equation
\[
\sum_{t=1}^{T}\text{tanh}(\phi_{r,t}-\gamma)=0,
\]
with $\phi_{r,t}$ given in \eqref{eq:phi_r}, admits a unique solution,
$\hat{\gamma}_{ML}$, almost surely. The maximum-likelihood robust
similarity estimator $\hat{\gamma}_{ML}$ is strongly consistent for
$\phi_{\rho}=\frac{1}{2}\log\Bigl(\frac{1+\rho}{1-\rho}\Bigl)$ and
has the asymptotic distribution 
\[
\sqrt{T}(\hat{\gamma}_{ML}-\phi_{\rho})\overset{d}{\rightarrow}\mathcal{N}\Bigl(0,\mathcal{I}^{-1}(\phi_{\rho})\Bigl),
\]
where $\mathcal{I}(\phi_{\rho})=\frac{1}{2}$ is the Fisher Information
of the hyperbolic secant family.
\end{prop}
Although $\hat{\gamma}_{ML}$ does not admit a closed-form expression,
the numeric estimation is fast, especially when it is accelerated
with the use of analytical derivatives. The gain in asymptotic efficiency
of $\hat{\gamma}_{ML}$ relative to $\hat{\gamma}$ is nonetheless
appreciable: the asymptotic variance falls from $\frac{\pi^{2}}{4}\approx2.467$
to $2$.

\begin{figure}[h]
\begin{centering}
\includegraphics[scale=0.8]{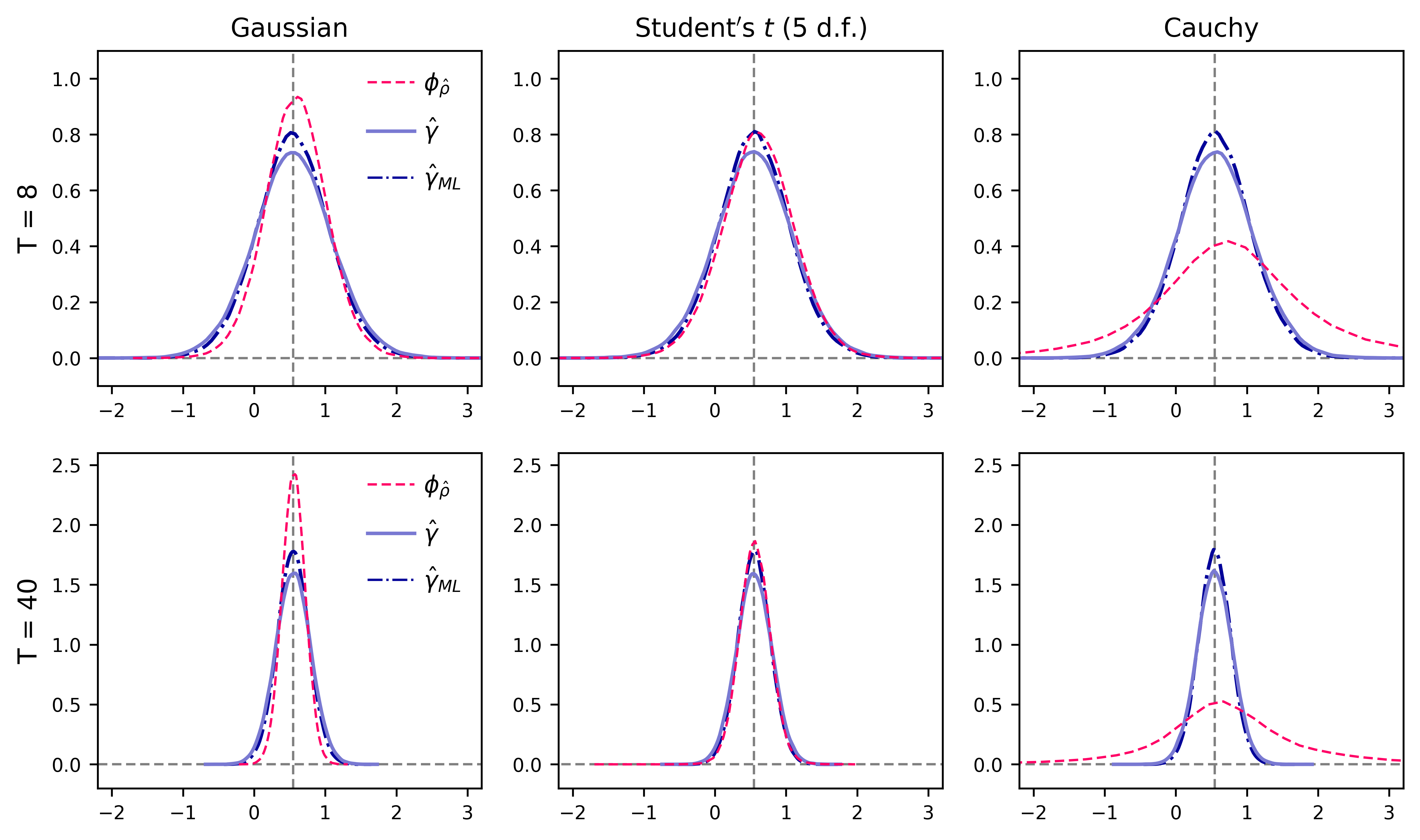}
\par\end{centering}
\caption{\footnotesize\label{fig:finite_sample_densities} Finite-sample distributions
of $\phi_{\hat{\rho}}$ (red dashed lines), $\hat{\gamma}$ (blue
solid lines) and $\hat{\gamma}_{ML}$ (dark dash-dotted lines) obtained
on 100,000 simulated samples of sizes $T=8$ (top plots) and $T=40$
(bottom plots). Data vectors $x_{t}$ were simulated out of the three
selected distributions -- normal, $t$-distribution with 5 degrees
of freedom, and Cauchy -- with the true correlation parameter $\rho=0.5$.}
\end{figure}

Figure \ref{fig:finite_sample_densities} provides small sample distributions
of $\hat{\gamma}$, $\hat{\gamma}_{ML}$ and $\phi_{\hat{\rho}}$
resulted from the simulation analysis. In this illustration we consider
two selected sample sizes ($T=8$ and $T=40$) and three selected
bivariate distributions of vector $x_{t}$: normal, $t$-distribution
with 5 degrees of freedom, and Cauchy. The figure shows an apparent
instability of the sampling density of $\phi_{\hat{\rho}}$ across
the considered elliptical distributions with different kurtosis parameters,
thus, highlighting the sensitivity of the sample correlation estimator
$\hat{\rho}$ to the presence of extreme observations. In contrast,
the sampling densities of $\hat{\gamma}$ and $\hat{\gamma}_{ML}$
are in line with the theoretically predicted results and stable across
all distribution specifications for both considered sample sizes.
Interestingly, the similarity estimators appear to outperform the
sample correlation estimator in terms of efficiency when $T$ is small
and/or the data are more heavy-tailed.

\subsection{\label{subsec:Realized-Similarity-Estimator}Similarity Estimator
under Variance Heterogeneity}

In a more general scenario, where the variables are not necessarily
scale homogeneous, the interpretation of the similarity estimator
is different. Let the scatter matrix of $x=(x_{1},x_{2})^{\prime}$
is such that $\sigma_{1}$ and $\sigma_{2}$ may be non-identical.
Consider quantity $\xi=\frac{2\sigma_{12}}{\sigma_{1}^{2}+\sigma_{2}^{2}}$,
which we refer to as the \textit{coefficient of resemblance}. This
coefficient is closely related to the linear correlation coefficient,
$\rho$, and can be interpreted as an alternative measure of statistical
association between random variables. More particularly, $\rho$ and
$\xi$ are proportionally related,
\begin{equation}
\xi=\frac{2\sigma_{1}\sigma_{2}}{\sigma_{1}^{2}+\sigma_{2}^{2}}\rho=\frac{2c}{1+c^{2}}\rho,\label{eq:cor}
\end{equation}
and the coefficient of proportionality depends only on the relative
scale, $c=\frac{\sigma_{2}}{\sigma_{1}}$. While the signs of $\xi=\xi(c,\rho)$
and $\rho$ are always identical, the magnitude of $\xi$ never exceeds
the magnitude of $\rho$, i.e. $|\xi|\leq|\rho|$, due to the Cauchy-Schwartz
inequality. For a fixed $\rho\neq0$, function $\xi(c,\rho)$ attains
its maximum when $c=1$, i.e. when $\sigma_{1}=\sigma_{2}$. The coefficient
of resemblance $\xi$ can be interpreted as a measure of statistical
similarity between the random variables, where both the correlation
and scales are taken into consideration. Namely, $\xi$ is higher
when the variables tend to exhibit higher correlation, along with
more similar scale parameters (magnitudes).

If $x$ is an elliptical random vector, an important feature of $\xi$
is that, under the Fisher transformation, it becomes the mean of the
transformed resemblance measure, $\phi_{r}=\frac{1}{2}\log\Bigl(\frac{1+r}{1-r}\Bigl)$,
where $r$ introduced in \eqref{eq:res}. This property is reflected
in the following proposition.
\begin{prop}
\label{prop:P2} Assume that $x=(x_{1},x_{2})^{\prime}$ is a bivariate
random vector which follows an elliptical distribution with zero location
and positive-definite scatter matrix $\Sigma$. Denote the measure
of resemblance by $r=\frac{2x_{1}x_{2}}{x_{1}^{2}+x_{2}^{2}}$, and
the corresponding Fisher transformation by $\phi_{r}=\frac{1}{2}\log\Bigl(\frac{1+r}{1-r}\Bigl)$.
Then, $\phi_{r}$ is distributed symmetrically around $\phi_{\xi}$
with $\mathbb{E}(\phi_{r})=\phi_{\xi}=\frac{1}{2}\log\Bigl(\frac{1+\xi}{1-\xi}\Bigl)$,
where $\xi=\frac{2\sigma_{12}}{\sigma_{1}^{2}+\sigma_{2}^{2}}$ is
the coefficient of resemblance, and the variance of $\phi_{r}$ is
given by 
\[
V_{\phi_{r}}=\frac{\pi^{2}}{6}-\sum_{k=1}^{\infty}\frac{\cos2k\vartheta}{k^{2}},
\]
where $\vartheta$ is a function of elements in $\Sigma$ (the exact
expression is provided in the proof).
\end{prop}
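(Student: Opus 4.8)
\noindent\textit{Proposed proof.}\quad The plan is to exploit the two structural facts highlighted in the text: $\phi_{r}$ is homogeneous of degree zero in $x$, hence depends only on the direction of $x$; and the angular law of a zero-mean elliptical vector is determined by its covariance (shape) matrix alone. First I would pass to $y=Bx$, where $B$ is the matrix with rows $(1,1)$ and $(1,-1)$, so that $\phi_{r}=\tfrac12\log(y_{1}^{2}/y_{2}^{2})=\log|y_{1}|-\log|y_{2}|$ and $y$ is again a zero-mean elliptical vector with covariance $\Sigma_{y}=B\Sigma B'$, whose entries are $V(y_{1})=\sigma_{1}^{2}+\sigma_{2}^{2}+2\sigma_{12}$, $V(y_{2})=\sigma_{1}^{2}+\sigma_{2}^{2}-2\sigma_{12}$ and $\mathrm{Cov}(y_{1},y_{2})=\sigma_{1}^{2}-\sigma_{2}^{2}$; here $\det\Sigma_{y}=(\det B)^{2}\det\Sigma=4\det\Sigma>0$ and $V(y_{2})>0$ because $\Sigma$ is positive definite. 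Since $\phi_{r}$ is scale-free, its law is unchanged if the stochastic representation $y\overset{d}{=}\varrho\,Mv$ --- with $v=(\cos\psi,\sin\psi)'$ uniform on the circle, $\varrho\ge0$ independent of $v$, and $MM'\propto\Sigma_{y}$ --- is replaced by its angular part only; choosing $M$ lower triangular with vanishing $(2,2)$ entry makes the ratio $y_{1}/y_{2}$ an explicit function of $\psi$.

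Carrying this out gives $y_{1}/y_{2}=\ell+s\tan\psi$ with $\ell=\mathrm{Cov}(y_{1},y_{2})/V(y_{2})$ and $s=\sqrt{\det\Sigma_{y}}/V(y_{2})>0$ (equivalently, $y_{1}/y_{2}$ is Cauchy with location $\ell$ and scale $s$). Writing $\ell=\kappa\cos\vartheta$, $s=\kappa\sin\vartheta$ with $\kappa=\sqrt{\ell^{2}+s^{2}}$ and $\vartheta\in(0,\pi)$, the identity $\ell\cos\psi+s\sin\psi=\kappa\cos(\psi-\vartheta)$ yields $\phi_{r}=\log\kappa+\log|\cos(\psi-\vartheta)|-\log|\cos\psi|$. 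I would then invoke the Fourier expansion $\log|2\cos t|=\sum_{k\ge1}\tfrac{(-1)^{k+1}}{k}\cos2kt$, valid in $L^{2}$ of the circle. Since $\psi$ is uniform, every $\cos2k\psi$ has zero mean, so $E\log|\cos\psi|=E\log|\cos(\psi-\vartheta)|=-\log2$ (the second equality because $\log|\cos|$ has period $\pi$, so its average over any length-$\pi$ window is constant), whence $E\phi_{r}=\log\kappa=\tfrac12\log(\ell^{2}+s^{2})$. A short reduction gives $\ell^{2}+s^{2}=\bigl(\mathrm{Cov}(y_{1},y_{2})^{2}+\det\Sigma_{y}\bigr)/V(y_{2})^{2}=V(y_{1})/V(y_{2})=(1+\xi)/(1-\xi)$, so $E\phi_{r}=\phi_{\xi}$. (For the mean alone there is an even shorter route: every one-dimensional marginal of $y$ is $\sqrt{V(y_{i})}$ times a common standardized elliptical variable $g$ with $E\log|g|$ finite, so $E\log|y_{1}|-E\log|y_{2}|=\tfrac12\log\{V(y_{1})/V(y_{2})\}$ at once.)

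For the variance I would apply Parseval to $\phi_{r}-\log\kappa=\log|\cos(\psi-\vartheta)|-\log|\cos\psi|$. Orthogonality of $\{\cos2k\psi\}_{k\ge1}$ under the uniform law gives $V(\log|\cos\psi|)=V(\log|\cos(\psi-\vartheta)|)=\tfrac12\sum_{k\ge1}k^{-2}=\pi^{2}/12$, while the terms involving $\sin2k\psi$ integrate to zero, leaving $\mathrm{Cov}\bigl(\log|\cos\psi|,\log|\cos(\psi-\vartheta)|\bigr)=\tfrac12\sum_{k\ge1}k^{-2}\cos2k\vartheta$. Hence $V_{\phi_{r}}=\tfrac{\pi^{2}}{12}+\tfrac{\pi^{2}}{12}-2\cdot\tfrac12\sum_{k\ge1}\tfrac{\cos2k\vartheta}{k^{2}}=\tfrac{\pi^{2}}{6}-\sum_{k\ge1}\tfrac{\cos2k\vartheta}{k^{2}}$, with $\vartheta=\arccos\!\bigl(\tfrac{\sigma_{1}^{2}-\sigma_{2}^{2}}{\sqrt{(\sigma_{1}^{2}+\sigma_{2}^{2})^{2}-4\sigma_{12}^{2}}}\bigr)\in(0,\pi)$ the angle introduced above; summing the series (using $\sum_{k\ge1}k^{-2}\cos k\alpha=\tfrac{\pi^{2}}{6}-\tfrac{\pi\alpha}{2}+\tfrac{\alpha^{2}}{4}$ on $[0,2\pi]$) also gives the closed form $V_{\phi_{r}}=\vartheta(\pi-\vartheta)$. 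As a sanity check, variance homogeneity forces $\mathrm{Cov}(y_{1},y_{2})=0$, so $\ell=0$, $\vartheta=\pi/2$, $V_{\phi_{r}}=\pi^{2}/4$, and $\kappa^{2}=V(y_{1})/V(y_{2})=(1+\rho)/(1-\rho)$, recovering Proposition~\ref{prop:P1}.

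The two places I expect to require care are: (a) the reduction to the angular part, where one should verify that $\phi_{r}$ is almost surely finite and square-integrable and that the null events $\{y_{1}=0\}$, $\{y_{2}=0\}$ and any atom of the radial variable $\varrho$ at the origin are harmless, using the standing assumption (Section~\ref{sec:Measure-of-Similarity}) that $\|x\|>0$ almost surely; and (b) the interchange of expectation with the infinite sums above, which is legitimate precisely because $\log|2\cos(\cdot)|$ lies in $L^{2}$ of the circle, so Parseval's identity applies to the variances and to the covariance alike.
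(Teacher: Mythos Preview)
Your proof is correct and follows the same skeleton as the paper's: both exploit the stochastic (angular) representation of the elliptical vector to write $\phi_{r}$ as a constant plus the difference of two $\log|\cos(\theta-\text{phase})|$ terms with a common uniform angle, then take moments. The paper works directly on $x=\eta A u$ with the Cholesky factor $A$ of $\Sigma$ and evaluates the variance via the tabulated integrals $E\log\cos^{2}\theta=-2\log2$ and $E\log^{2}\cos^{2}\theta=4(\log^{2}2+\pi^{2}/12)$, leaving the cross term as the series; you instead first pass to $y=Bx$ and then obtain the variance through the Fourier expansion $\log|2\cos t|=\sum_{k\ge1}(-1)^{k+1}k^{-1}\cos2kt$ and Parseval's identity. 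Your route buys two extras the paper does not state: an explicit formula $\vartheta=\arccos\bigl((\sigma_{1}^{2}-\sigma_{2}^{2})/\sqrt{(\sigma_{1}^{2}+\sigma_{2}^{2})^{2}-4\sigma_{12}^{2}}\bigr)$ and the closed form $V_{\phi_{r}}=\vartheta(\pi-\vartheta)$. One slip to fix: ``lower triangular with vanishing $(2,2)$ entry'' is self-contradictory (that matrix would be singular); what you actually use is the factor $M$ with second row $(\sqrt{V(y_{2})},\,0)$ --- the Cholesky factor for the reversed ordering $(y_{2},y_{1})$, which is \emph{upper} triangular in the $(y_{1},y_{2})$ ordering --- and it does give $y_{1}/y_{2}=\ell+s\tan\psi$ exactly as you claim, so the remainder of the argument is unaffected.
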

It is important to mention that $\phi_{r}$, as well as its distribution
and moments, retain robustness to outliers due to intrinsic insensitivity
of $r$ to the total magnitude of $x$. In contrast to the homoskedastic
case ($\sigma_{1}=\sigma_{2}$), the variance of $\phi_{r}$ does
depend on the underlying scatter matrix of $x$. However, for any
positive definite $\Sigma$, we have that $V_{\phi_{r}}\leq\frac{\pi^{2}}{4}$,
with the equality holds only for $\sigma_{1}=\sigma_{2}$. Therefore,
$V_{\phi_{r}}$ reaches its maximum value under the scale homogeneity,
and has a lower value otherwise.

The results in Proposition \ref{prop:P2} allow to generalize the
asymptotic properties of the similarity estimator $\hat{\gamma}$
for the case of potentially heterogeneous scales. We assume a sample
of random vectors $x_{t}=(x_{1,t},x_{2,t})^{\prime}$, for $t=1,...,T$,
are independent and elliptically distributed with a non-singular scatter
matrix $\Sigma$. Then $\hat{\gamma}$ has the limit distribution
\begin{equation}
\sqrt{T}(\hat{\gamma}-\phi_{\xi})\overset{d}{\rightarrow}\mathcal{N}\Bigl(0,V_{\phi_{r}}\Bigl),\label{eq:gamma_clt_2}
\end{equation}
where $\phi_{\xi}$ is the Fisher transformation of $\xi$ and $V_{\phi_{r}}$
is provided in Proposition \ref{prop:P2}. As a result, $\hat{\gamma}$
is a consistent and robust estimator of the resemblance coefficient
(on the Fisher scale), and its sampling distribution remains stable
across the wide class of elliptical densities for the sample observations.
Similarly, $\hat{\gamma}_{ML}$ estimator is consistent for $\phi_{\xi}$,
however it should be interpreted as a pseudo-ML estimator because
the exact distribution of $\phi_{r}$ depends on the unknown scatter
matrix.

\subsection{Scale-Equalized Estimator of Correlation}

If one wants to estimate the correlation coefficient but the scale
parameters $\sigma_{1}$ and $\sigma_{2}$ are non-identical, the
variables need to be re-scaled such that the new scales become equal.
Denote $\eta_{c}=\log c=\log\frac{\sigma_{2}}{\sigma_{1}}$, then
the corresponding re-scaling is given by $\tilde{x}_{1}(\eta_{c})=e^{\frac{\eta_{c}}{2}}x_{1}$
and $\tilde{x}_{2}(\eta_{c})=e^{-\frac{\eta_{c}}{2}}x_{2}$, and the
similarity estimator constructed on the re-scaled variables becomes
a consistent estimator of $\phi_{\rho}$. The following two-step approach
can be used to re-scale the variables while staying in the robust
register and without affecting the correlation structure. 

At first, construct the variable $w_{t}=\log|x_{2,t}|-\log|x_{1,t}|$,
for $t=1,...,T$, and consider the sample median statistic,
\begin{equation}
\hat{\eta}=\underset{t=1,...,T}{\text{median}}(w_{t})=\underset{\eta\in\mathbb{R}}{\text{argmin}}\sum_{t=1}^{T}|w_{t}-\eta|,\label{eq:median}
\end{equation}
which can also be stated as a solution of a minimization problem.
Anticipating the consistency of $\hat{\eta}$ for $\eta_{c}$, established
in the Proposition \ref{prop:P3} below, we re-scale the sample variables
and construct 

\begin{equation}
\tilde{r}_{t}(\hat{\eta})=\frac{2\tilde{x}_{1,t}(\hat{\eta})\tilde{x}_{2,t}(\hat{\eta})}{\tilde{x}_{1,t}^{2}(\hat{\eta})+\tilde{x}_{2,t}^{2}(\hat{\eta})},\qquad\text{where}\quad\left(\begin{array}{c}
\tilde{x}_{1,t}(\hat{\eta})\\
\tilde{x}_{2,t}(\hat{\eta})
\end{array}\right)=\left(\begin{array}{cc}
e^{\frac{\hat{\eta}}{2}} & 0\\
0 & e^{-\frac{\hat{\eta}}{2}}
\end{array}\right)\left(\begin{array}{c}
x_{1,t}\\
x_{2,t}
\end{array}\right),\quad\text{for}\;t=1,...,T\label{eq:opt_problem}
\end{equation}
Then, the scale-equalized estimator represents the similarity estimator
\eqref{eq:realized_sim}, or its ML variant given in Proposition \ref{prop:P1},
applied to $\tilde{r}_{t}(\hat{\eta})$, for $t=1,...,T$, constructed
with re-scaled observations. The result for the scale-equalized ML
similarity estimator is formalized in the following proposition.
\begin{prop}
\label{prop:P3} For a random sample of bivariate elliptical vectors,
$x_{t}=(x_{1,t},x_{2,t})^{\prime}$, for $t=1,...,T$, with zero location
and positive-definite scatter matrix $\Sigma$, the median estimator
$\hat{\eta}$ given in \eqref{eq:median} is strongly consistent and
asymptotically normal with
\[
\hat{\eta}\xrightarrow{a.s.}\eta_{c}=\log\frac{\sigma_{2}}{\sigma_{1}}\qquad\text{and}\qquad\sqrt{T}(\hat{\eta}-\eta_{c})\overset{d}{\rightarrow}\mathcal{N}\Bigl(0,\frac{\pi^{2}}{4}(1-\rho^{2})\Bigl).
\]
Moreover, the scale-equalized ML similarity estimator, $\hat{\gamma}_{ML*}$,
which solves $\sum_{t=1}^{T}\text{tanh}(\tilde{\phi}_{r,t}(\hat{\eta})-\gamma)=0$,
where $\tilde{\phi}_{r,t}(\hat{\eta})=\frac{1}{2}\log\frac{1+\tilde{r}_{t}(\hat{\eta})}{1-\tilde{r}_{t}(\hat{\eta})}$
with $\tilde{r}_{t}(\hat{\eta})$ constructed on the re-scaled variables
as in \eqref{eq:opt_problem}, has a unique solution almost surely
for every $T\geq3$, is a strongly consistent estimator of $\phi_{\rho}$,
and has the asymptotic distribution
\[
\sqrt{T}(\hat{\gamma}_{ML*}-\phi_{\rho})\overset{d}{\rightarrow}\mathcal{N}\Bigl(0,2\Bigl),
\]
where $\phi_{\rho}=\frac{1}{2}\log\Bigl(\frac{1+\rho}{1-\rho}\Bigl)$
is the Fisher transformation of $\rho$.
\end{prop}
The finite-sample properties of the scale-equalized similarity estimator
$\hat{\gamma}_{ML*}$ are illustrated in Figure \ref{fig:finite_sample_densities_mod}.
We consider the same underlying distributions and sample sizes as
in \ref{fig:finite_sample_densities}, but now with heterogeneous
scale parameters, $\sigma_{2}^{2}=4\sigma_{1}^{2}$. The results confirm
that the finite-sample distribution of $\hat{\gamma}_{ML*}$ quickly
converges to the asymptotic limit being sufficiently close to it already
for $T=8$ and almost identical to it for $T=40$. As it is expected
from the given robust estimation design, the sampling distribution
of $\hat{\gamma}_{ML*}$ is remarkably stable across different underlying
data distributions even for very small $T$. 

\begin{figure}[h]
\begin{centering}
\includegraphics[scale=0.8]{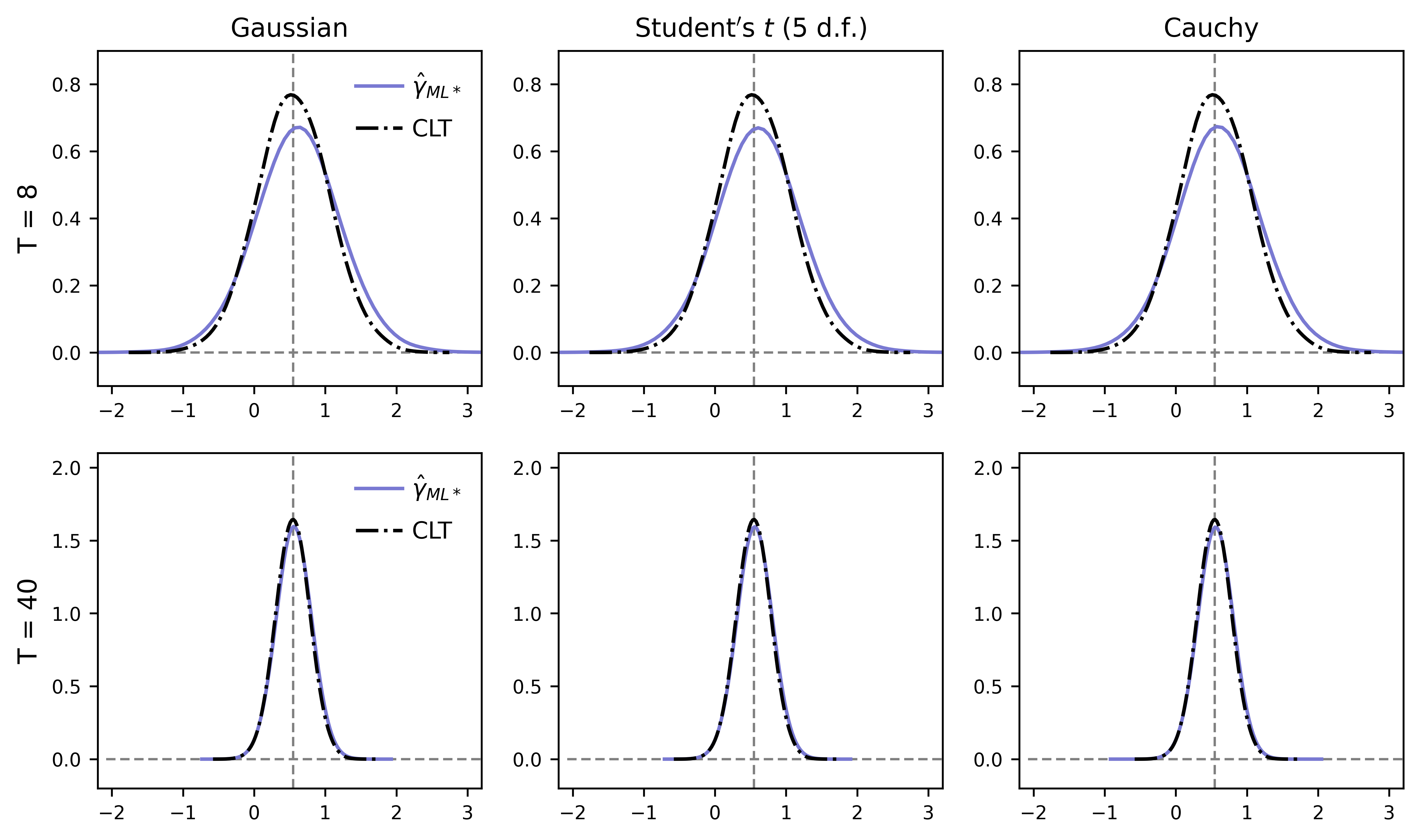}
\par\end{centering}
\caption{\footnotesize\label{fig:finite_sample_densities_mod} Finite-sample
distributions of the scale-equalized ML similarity estimator $\hat{\gamma}_{ML*}$
(blue solid lines) obtained on $100,000$ simulated samples of sizes
$T=8$ (top plots) and $T=40$ (bottom plots). Data vectors $x_{t}$
were simulated out of the three selected bivariate distributions --
normal, $t$-distribution with $5$ degrees of freedom, and Cauchy
-- with the true correlation parameter $\rho=0.5$ and with heterogeneous
variances, $\sigma_{2}^{2}=4\sigma_{1}^{2}$. Black dash-dotted lines
show the distribution of $\hat{\gamma}_{ML*}$ predicted by the asymptotic
result, $\sqrt{T}(\hat{\gamma}_{ML*}-\phi_{\rho})\xrightarrow{d}\mathcal{N}\bigl(0,2\bigl)$.}
\end{figure}

Obviously, the suggested scheme for re-scaling is not a unique possible
method. Alternatively, the variables can be re-scaled by using any
robust scale or variance estimators. Once $\hat{\sigma}_{1}$ and
$\hat{\sigma}_{2}$ are consistent (and robust) estimates of the corresponding
scale parameters, the standardized observations can be constructed,
$z_{t}=\Bigl(\frac{x_{1,t}}{\hat{\sigma}_{1}},\frac{x_{2,t}}{\hat{\sigma}_{2}}\Bigl)^{\prime}$.
In the second step, the similarity estimator, $\hat{\gamma}$ or $\hat{\gamma}_{ML}$,
is applied to the standardized observations $z_{t}$.

\section{\label{sec:Measuring-Similarity-for}Measuring Similarity for Multiple
Variables}

The quantity $\phi_{r,t}$ formulated in \eqref{eq:phi_r} can be
represented as a log-ratio of the \textit{local} dispersion measures
of vector $x_{t}$ -- variation of the sum and variation of the difference
of the vector components, where $x_{t}=(x_{1,t},x_{2,t})^{\prime}$
is a zero location vector. For fixed $\sigma_{1}^{2}$ and $\sigma_{2}^{2}$,
an increase in $\rho$ makes the first variation higher and the second
variation lower, and conversely. Naturally, for $\rho\approx\pm1$
the contrast between the variation measures is the highest, while
for $\rho=0$ they are identical. Therefore, $\phi_{r,t}$ can be
viewed as a local measure of the divergence between quadratic forms
$(x_{1,t}+x_{2,t})^{2}$ and $(x_{1,t}-x_{2,t})^{2}$ which nests
information about the correlation coefficient $\rho$, and this information
can be explicitly recovered once $\sigma_{1}^{2}=\sigma_{2}^{2}$,
as shown in Section \ref{subsec:Variance-Homogeneity:-Robust}. 

We note that $(x_{1,t}+x_{2,t})^{2}$ can alternatively be represented
as the squared projection of $x_{t}$ onto the direction of the vector
of ones, i.e. the direction of perfect\textit{ similarity}. Analogously,
$(x_{1,t}-x_{2,t})^{2}$ is the squared projection of $x_{t}$ onto
the orthogonal direction, i.e. the direction of \textit{dissimilarity}.
Therefore, $\phi_{r,t}$, and consequently $\hat{\gamma}$, indicate
the degree of similarity in the joint variation of $x_{1,t}$ and
$x_{2,t}$ measured by the relative magnitude of variation along the
vector of ones. This idea can be directly generalized to random vectors
of arbitrarily large dimension.

Assume that $x_{t}=(x_{1,t},x_{2,t},...,x_{n,t})^{\prime}$ is a zero
location $n$-dimensional elliptical vector with positive definite
scatter matrix $\Sigma$. We denote the $n$-dimensional vector of
ones by $\iota_{n}=(1,1,...,1)^{\prime}$. The matrix $P_{n}=\frac{1}{n}\iota_{n}\iota_{n}^{\prime}$
is the projection matrix such that the projection of $x_{t}$ onto
the direction defined by the vector $\iota_{n}$ is given by $P_{n}x_{t}$.
Then, $P_{n}^{\perp}=I_{n}-P_{n}$, where $I_{n}$ is the identity
matrix of dimension $n$, is also the projection matrix such that
$P_{n}^{\perp}x_{t}$ represents the projection of $x_{t}$ onto the
subspace orthogonal to $\iota_{n}$. Then, the local similarity measure
$\phi_{r,t}$ can be naturally extended to the general multivariate
case via 
\begin{equation}
\phi_{r,t}=\frac{1}{n}\log\frac{x_{t}^{\prime}P_{n}x_{t}}{x_{t}^{\prime}P_{n}^{\perp}x_{t}},\label{eq:local_sim_mult}
\end{equation}
which captures the variation of $x_{t}$ along the vector of \textit{perfect
similarity}, $\iota_{n}$, relative to the variation of $x_{t}$ along
its orthogonal complement. Note that, for the special case $n=2$,
equation \eqref{eq:local_sim_mult} is reduced to the same expression
for $\phi_{r,t}$ as in \eqref{eq:phi_r}. 

As in the bivariate case, the similarity estimator $\hat{\gamma}$
is constructed as $\hat{\gamma}=\frac{1}{T}\sum_{t=1}^{T}\phi_{r,t}$.
Intuitively, $\hat{\gamma}$ represents an aggregate measure of how
well $n$ considered variables agree simultaneously in magnitude and
direction. 

\subsection{\label{subsec:Equicorrelation-Scenario} Equicorrelation Scenario}

In the multivariate scenario with $n>2$, the similarity estimator
$\hat{\gamma}$ has, in general, less tractable interpretation as
compared to the bivariate setting. In some special cases, however,
$\hat{\gamma}$ retains explicit finite-sample and asymptotic limits,
and is immediately related to the linear correlation coefficient. 

Let the entries of the scatter matrix $\Sigma$ are denoted by 
\[
\Sigma=\left(\begin{array}{cccc}
\sigma_{1}^{2} & \cdot & \cdot & \cdot\\
\sigma_{12} & \sigma_{2}^{2} & \cdot & \cdot\\
\cdot & \cdot & \ddots & \cdot\\
\sigma_{1n} & \sigma_{2n} & \cdot & \sigma_{n}^{2}
\end{array}\right),
\]
and here we assume that all scale parameters are equal, i.e. $\sigma_{1}^{2}=\sigma_{2}^{2}=...=\sigma_{n}^{2}=\sigma^{2}$,
and all off-diagonal scatter elements are identical too. This implies
that all pairwise correlations are also identical, and, so, we can
write $\sigma_{ij}=\sigma^{2}\rho$ for all $i$ and $j$, such that
$i\neq j$, where $\rho$ is the common correlation parameter. Note
that, once $\Sigma$ is positive definite by assumption, it implies
that $\rho\in\bigl(-\frac{1}{n-1},1\bigl)$. 

The matrix logarithm transformation suggests a convenient method for
reparametrization of correlation matrices in such way that the transformed
correlation elements can be represented as an unconstrained real vector
which can always be mapped to a unique positive definite correlation
matrix. In this sense, the parametrization based on the matrix logarithm
can be viewed as a multi-dimensional generalization of the Fisher
transformation, see \citet{Archakov_Hansen_2021} for more details.
When the matrix logarithm transformation is applied to $\Sigma$ with
homogeneous variances and equal correlations, the general structure
of the matrix is preserved after the transformation, i.e. all diagonal
and all off-diagonal elements remain identical, 

\setlength{\arraycolsep}{5pt}
\[
\Sigma=\sigma^{2}\left(\begin{array}{ccccc}
1 & \cdotp & \cdotp & \cdot & \cdotp\\
{\color{purple}{\footnotesize \rho}} & 1 & \cdotp & \cdot & \cdotp\\
{\color{purple}{\footnotesize \rho}} & {\color{purple}{\footnotesize \rho}} & 1 & \cdot & \cdotp\\
\vdots & \cdot & \vdots & \ddots & \cdot\\
{\color{purple}{\footnotesize \rho}} & {\color{purple}{\footnotesize \rho}} & {\color{purple}{\footnotesize \rho}} & \cdots & 1
\end{array}\right),\qquad\log\Sigma=\left(\begin{array}{ccccc}
\delta & \cdotp & \cdotp & \cdot & \cdotp\\
{\color{blue}\phi_{\rho}} & \delta & \cdotp & \cdot & \cdotp\\
{\color{blue}\phi_{\rho}} & {\color{blue}\phi_{\rho}} & \delta & \cdot & \cdotp\\
\vdots & \cdot & \vdots & \ddots & \cdot\\
{\color{blue}\phi_{\rho}} & {\color{blue}\phi_{\rho}} & {\color{blue}\phi_{\rho}} & \cdots & \delta
\end{array}\right),
\]
as it is shown in \citet{Archakov_Hansen_2024}. The eigenvalues of
matrix $\Sigma$ are $\lambda_{+}=\sigma^{2}\bigl(1+(n-1)\rho\bigl)$
and $\lambda_{-}=\sigma^{2}(1-\rho)$ with multiplicities $1$ and
$n-1$, respectively, see \citet{Olkin_Pratt_1958}. The entries of
$\log\Sigma$ are analytically available and given by $\delta=\frac{1}{n}\log\lambda_{+}+\frac{n-1}{n}\log\lambda_{-}$
and 
\begin{equation}
\phi_{\rho}=\frac{1}{n}\log\frac{\lambda_{+}}{\lambda_{-}}=\frac{1}{n}\log\Bigl(\frac{1+(n-1)\rho}{1-\rho}\Bigl).\label{eq:gamma_equi}
\end{equation}
We note that the off-diagonal element $\phi_{\rho}$ depends only
on $\rho$ and $n$, and does not depend on $\sigma^{2}$. Naturally,
this result is a generalization of the matrix logarithm result for
$n=2$ in Section \ref{subsec:Variance-Homogeneity:-Robust}, so we
preserve notation $\phi_{\rho}$ introduced earlier. Based on this
identity, the distribution of $\phi_{r,t}$ in \eqref{eq:local_sim_mult}
can be derived explicitly which is reflected in the following result.
\begin{prop}
\label{prop:P4} Assume that $x=(x_{1},x_{2},...,x_{n})^{\prime}$
is a $n$-variate elliptical vector with zero location and positive-definite
scatter matrix $\Sigma$ with homogeneous scales and identical correlations
(equi-correlation structure), and let denote the common correlation
coefficient by $\rho$. Then, $\phi_{r}=\frac{1}{n}\log\frac{x^{\prime}P_{n}x}{x^{\prime}P_{n}^{\perp}x}$
is a random variable from the Logistic-Beta family with the probability
density function
\[
f(\phi_{r})=\frac{1}{\mathcal{B}\Bigl(\frac{1}{2},\frac{n-1}{2}\Bigl)}\cdot\frac{ne^{\frac{1}{2}n(\phi_{r}-\phi_{\rho})}}{\Bigl(1+e^{n(\phi_{r}-\phi_{\rho})}\Bigl)^{\frac{n}{2}}},
\]
where $\phi_{\rho}=\frac{1}{n}\log\Bigl(\frac{1+(n-1)\rho}{1-\rho}\Bigl)$
is the off-diagonal element of $\log\Sigma$. 
\end{prop}
Note that $\phi_{r}$ is not an unbiased measure of $\phi_{\rho}$
because $\mathbb{E}(\phi_{r})=\phi_{\rho}-\omega_{n}$, where $\omega_{n}\ge0$
is a non-monotone function of $n$ such that $\omega_{2}=0$ and $\omega_{n}\rightarrow0$
as $n\rightarrow\infty$. The exact expression for $\omega_{n}$ is
provided in Appendix. Variance of $\phi_{r}$ is inversely proportional
to $n^{2}$ and is given by 
\[
\mathbb{V}(\phi_{r})=\frac{1}{n^{2}}\biggl(\psi^{\prime}\Bigl(\frac{n-1}{2}\Bigl)+\frac{\pi^{2}}{2}\biggl),
\]
where $\psi^{\prime}(t)$ is the trigamma function. Naturally, when
vector dimension $n$ increases, $\phi_{r}$ absorbs more information
about the common correlation coefficient from the cross-sectional
dimension and, so, provides an increasingly more efficient signal
about the latent correlation level. In Figure \ref{fig:pdf2}, we
illustrate the Logistic-Beta probability density function of $\phi_{r}-\phi_{\rho}$
for several selected values of $n$. 

In the special case of homogeneous scales and correlations, Proposition
\ref{prop:P4} helps to characterize the asymptotic properties of
the similarity estimator $\hat{\gamma}$. With a bias correction,
$\hat{\gamma}+\omega_{n}$ becomes a consistent and robust estimator
for the off-diagonal element $\phi_{\rho}$ of the log-transformed
scatter matrix $\Sigma$, where $\phi_{\rho}$ represents a monotone
transformation of the equicorrelation coefficient $\rho$, similarly
to the Fisher transformation function in the bivariate case. The asymptotic
variance of $\hat{\gamma}$ is given by $\mathbb{V}(\phi_{r})$, and
the estimator becomes more efficient with dimension $n$ due to the
growth of relevant cross-sectional information. 

\begin{figure}[h]
\begin{centering}
\includegraphics[scale=0.8]{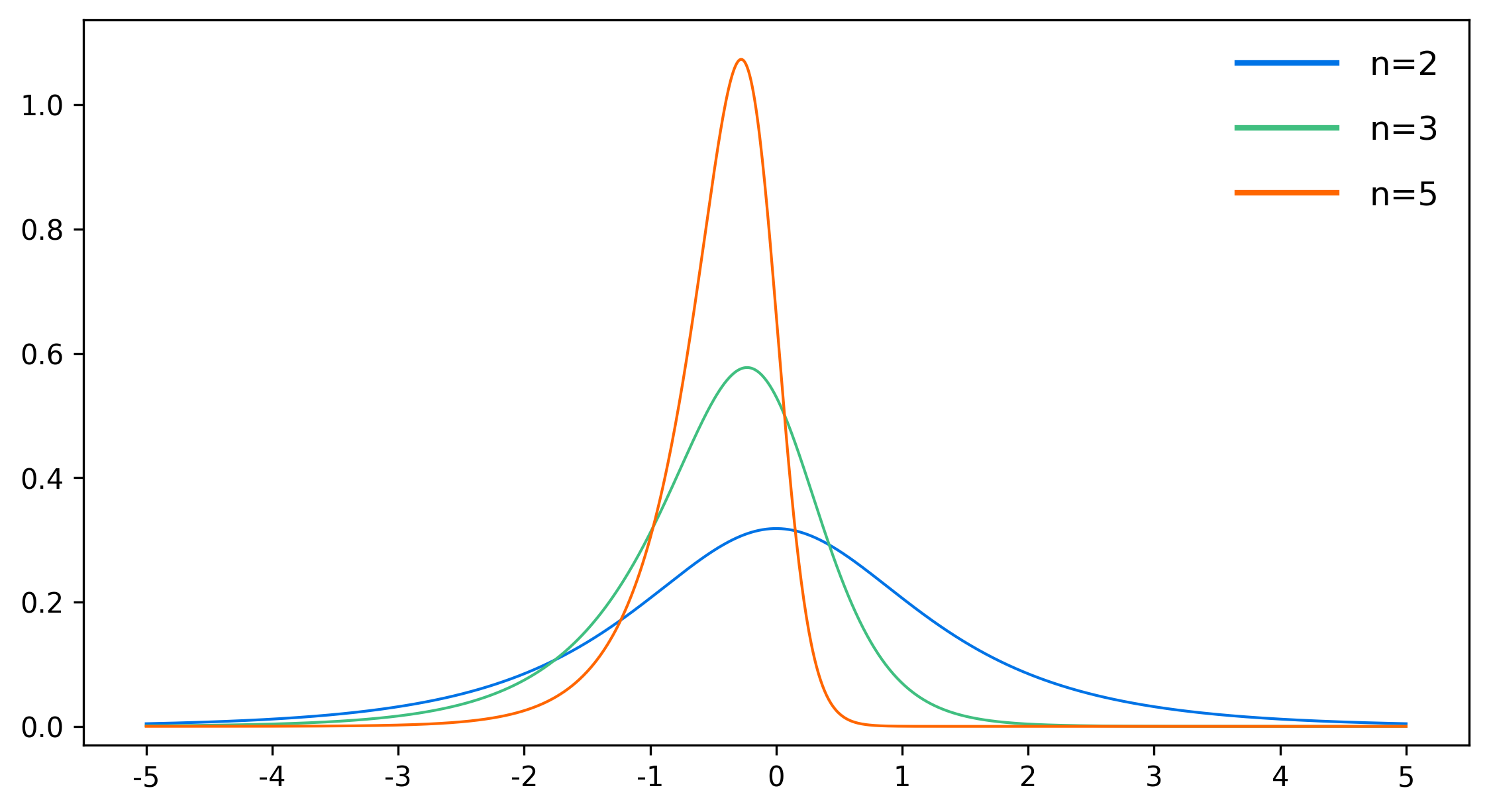}
\par\end{centering}
\caption{\footnotesize\label{fig:pdf2} Probability density functions of $\phi_{r}-\phi_{\rho}$
for different dimensions $n$ of vector $x$.}
\end{figure}

Despite the structure assumed for $\Sigma$ is restrictive and does
not occur in practice frequently, the results reveal several qualitative
aspects which are characteristic of the similarity estimator, $\hat{\gamma}$,
in more general and realistic scenarios. At first, $\hat{\gamma}$
is a robust estimator of an aggregate measure of association between
a set of multiple variables. Although this does not directly correspond
to the average correlation level for an arbitrary covariance or scatter
matrix $\Sigma$, it still can be interpreted as an indicator of \textit{joint}
statistical similarity between variables, i.e. how well they agree
in both magnitude and direction. At second, if elements in $\Sigma$
are sufficiently homogeneous, $\hat{\gamma}$ is expected to be more
efficient once $n$ gets larger. This is because each additional variable
provides extra information about the average level of similarity. 

\section{Empirical Applications}

Although the similarity estimator can be used as a standalone measure
of statistical association, in this section, we consider empirical
applications where the estimator is used for estimation and modeling
of correlations. The two considered applications focus on measuring
correlations between financial stock returns. 

\subsection{Robust Interval Estimation of Stock Return Correlations }

We apply the similarity estimator for robust interval estimation of
financial correlations. For this analysis, we use intraday transaction
data from the TAQ database cleaned according to the recommendations
provided in \citet{Barndorff-Nielsen2009}. For exposition purposes,
we begin with considering daily correlations between Apple (AAPL)
and Exxon Mobil (XOM) returns during the period between February and
April 2020 (62 trading days) which corresponds to the COVID-19 outbreak. 

We note that AAPL and XOM belong to different sectors of the economy
(Tech vs Energy) implying the correlation is not supposed to be particularly
strong in ordinary periods. During the COVID-19 crisis, however, the
overall correlation level has significantly elevated across the entire
market due to i) an initial decline of the market that affected almost
all sectors and industries (largely began on February 19), and ii)
the subsequent common recovery (began after March 23). Therefore,
we may expect some prominent dynamics of the latent correlation trajectory
during the analyzed period.

For each considered trading day, we construct samples of intraday
(logarithmic) returns, $r_{t}=(r_{1,t},r_{2,t})^{\prime}$, $t=1,...,T$,
for a range of selected frequencies by using the previous tick interpolation
scheme which was introduced in \citet{Wasserfallen_Zimmermann_1985}
and is used routinely in high frequency econometrics (see, for example,
\citet{Hansen_Lunde_2006}). Since the duration of a typical trading
day is $6.5$ hours, we obtain a sample with $T=39$ observations
for the frequency $\Delta=10$ min, while we have $T=780$ observations
for $\Delta=30$ sec. 

\begin{figure}[!ph]
\begin{centering}
\subfloat[Intraday returns are observed at $\Delta=10$ min frequency ($T=39$
observations).]{\centering{}\includegraphics[scale=0.52]{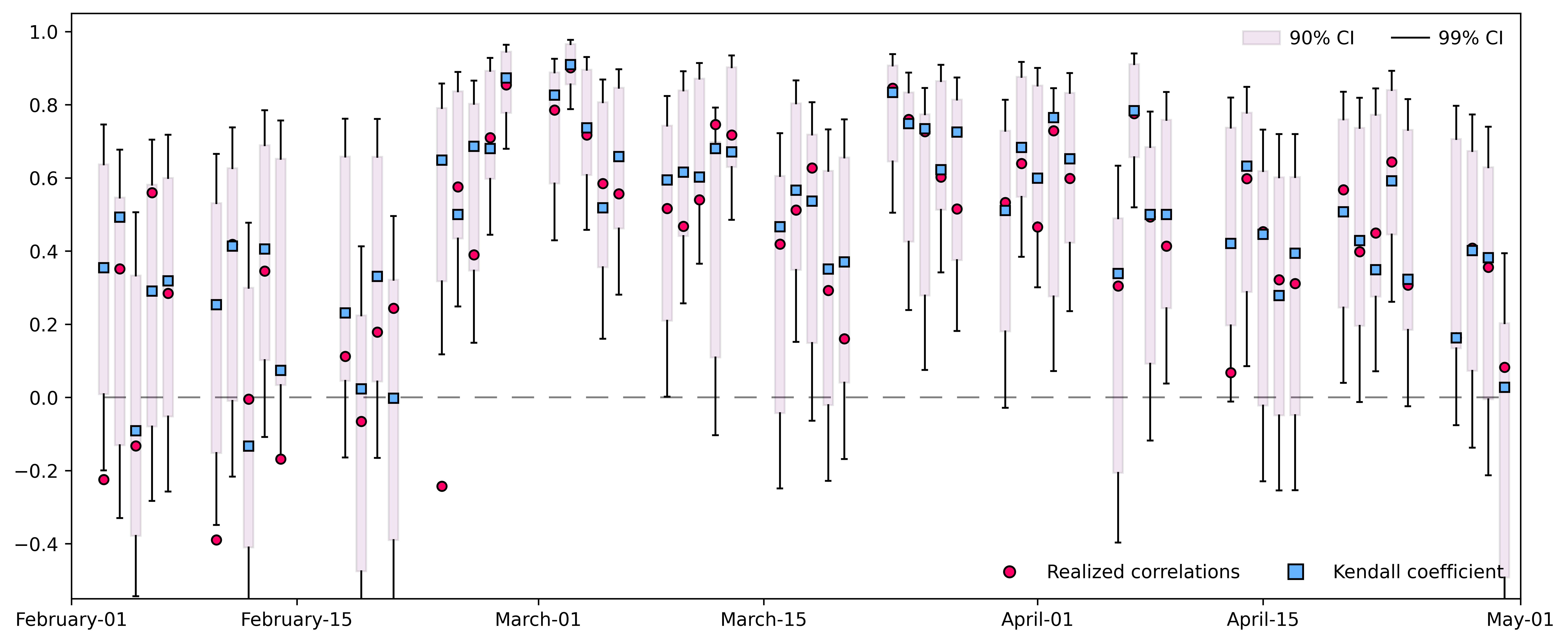}}
\par\end{centering}
\begin{centering}
\subfloat[Intraday returns are observed at $\Delta=2$ min frequency ($T=185$
observations).]{\centering{}\includegraphics[scale=0.52]{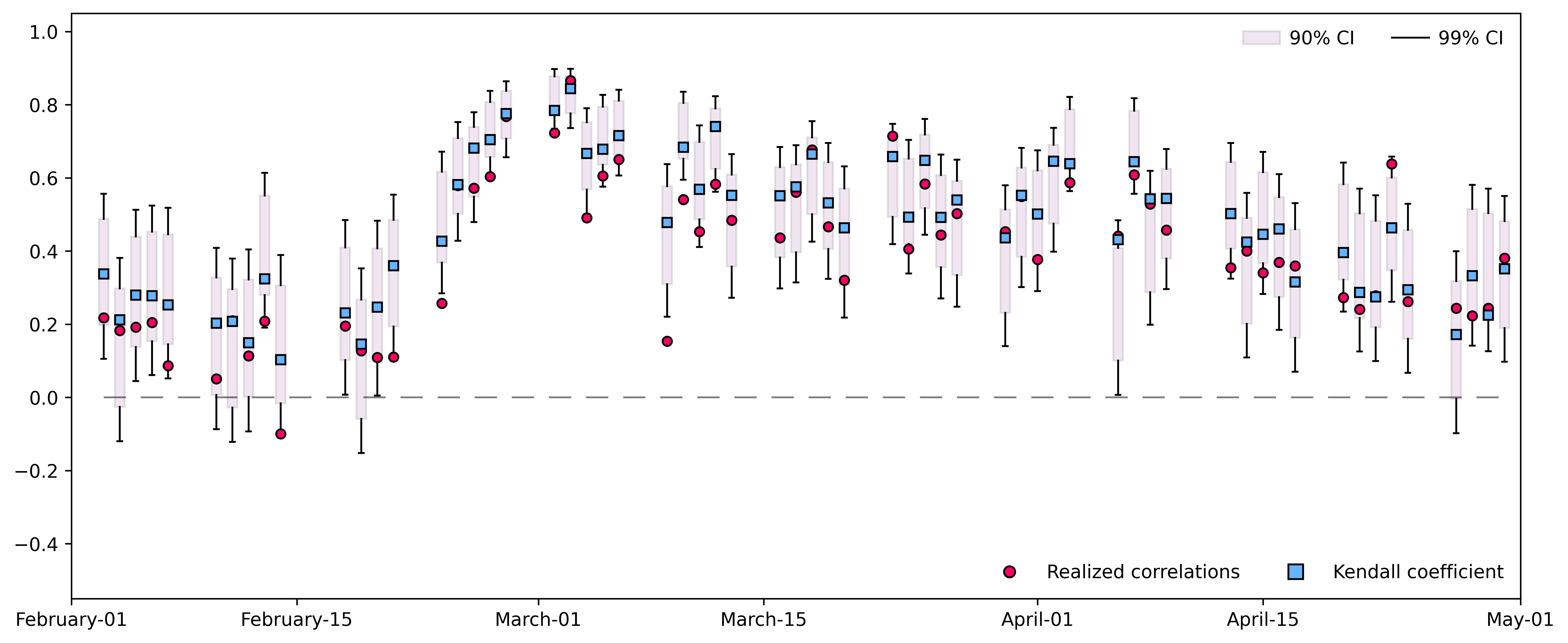}}
\par\end{centering}
\centering{}\subfloat[Intraday returns are observed at $\Delta=30$ sec frequency ($T=780$
observations).]{\centering{}\includegraphics[scale=0.52]{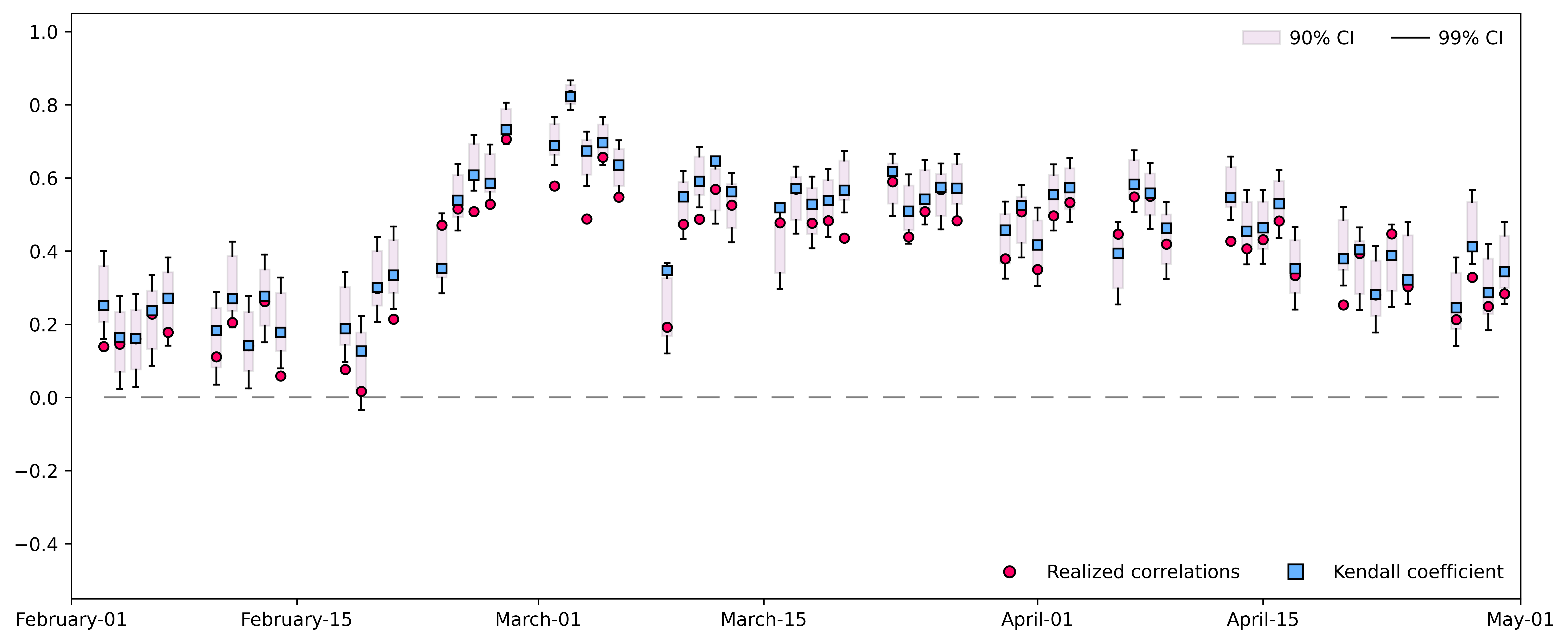}}\caption{\footnotesize\label{fig:spring_2020} Confidence intervals for daily
correlations between Apple (AAPL) and Exxon Mobil (XOM) estimated
using intraday returns. The analyzed period is between February and
April 2020 (62 trading days). The robust intervals are obtained for
coverage probabilities 90\% (boxes) and 99\% (whiskers). Red circles
correspond to daily Realized (sample) Correlations, while blue squares
correspond to daily correlations estimated by the Kendall tau coefficient. }
\end{figure}

To construct an interval with a specified coverage probability, we
estimate the correlation coefficient (on the Fisher scale) using the
scale-equalized ML similarity estimator $\hat{\gamma}_{ML*}$. Next,
we obtain the required interval around $\hat{\gamma}_{ML*}$ using
quantiles of the asymptotic distribution given in Proposition \ref{prop:P3},
and map the interval from the Fisher scale to the standard correlation
scale. The resulting intervals for daily correlations are obtained
for 90\% and 99\% coverage probabilities for the three selected intraday
frequencies ($\Delta=10$ min, $2$ min, and $30$ sec) and are shown
in Figure \ref{fig:spring_2020}. The figure also provides point correlation
estimates calculated with two benchmark estimators -- the sample
(realized) correlation estimator and the Kendall rank coefficient
(see Section \ref{sec:Realized-Similarity-Estimator} for the corresponding
expressions). The results in Figure \ref{fig:spring_2020} reveal
several interesting aspects.

When constructed with returns sampled at $\Delta=10$ min frequency,
the intervals are wide and thus not very informative about the underlying
correlation level. For the majority of trading days, the intervals
do not even allow to reject the null hypothesis of zero correlation.
Such conservative interval widths reflect the trade-off required to
achieve robustness and invariance of the constructed intervals under
arbitrary fat-tailed data. At higher frequencies, where more observations
are available, intervals naturally shrink, and the evolution of the
correlation level over time becomes more apparent. Thus, when constructed
at $\Delta=30$ sec frequency, the intervals are sufficiently narrow
allowing to clearly observe a sharp surge of the correlation in the
second half of February and gradual non-monotone subsequent decay.
Interestingly, the estimated intervals appear to cluster within weeks,
which may indicate highly persistent correlation dynamics.

The robust intervals based on the similarity estimator show good agreement
with the benchmark correlation estimators for almost all trading days
at moderate and low sampling frequencies, where the intervals are
sufficiently wide. For a number of trading days, however, we observe
that the sample correlations fall below the constructed intervals
as well as below the estimates obtained with the Kendall estimator.
This can be explained by the common presence of price jumps in the
market data inducing outliers in high-frequency returns. While the
realized correlation estimator exhibits a downward bias in such cases,
the Kendall and similarity estimators remain robust.

\begin{figure}[ph]
\begin{centering}
\includegraphics[scale=0.75]{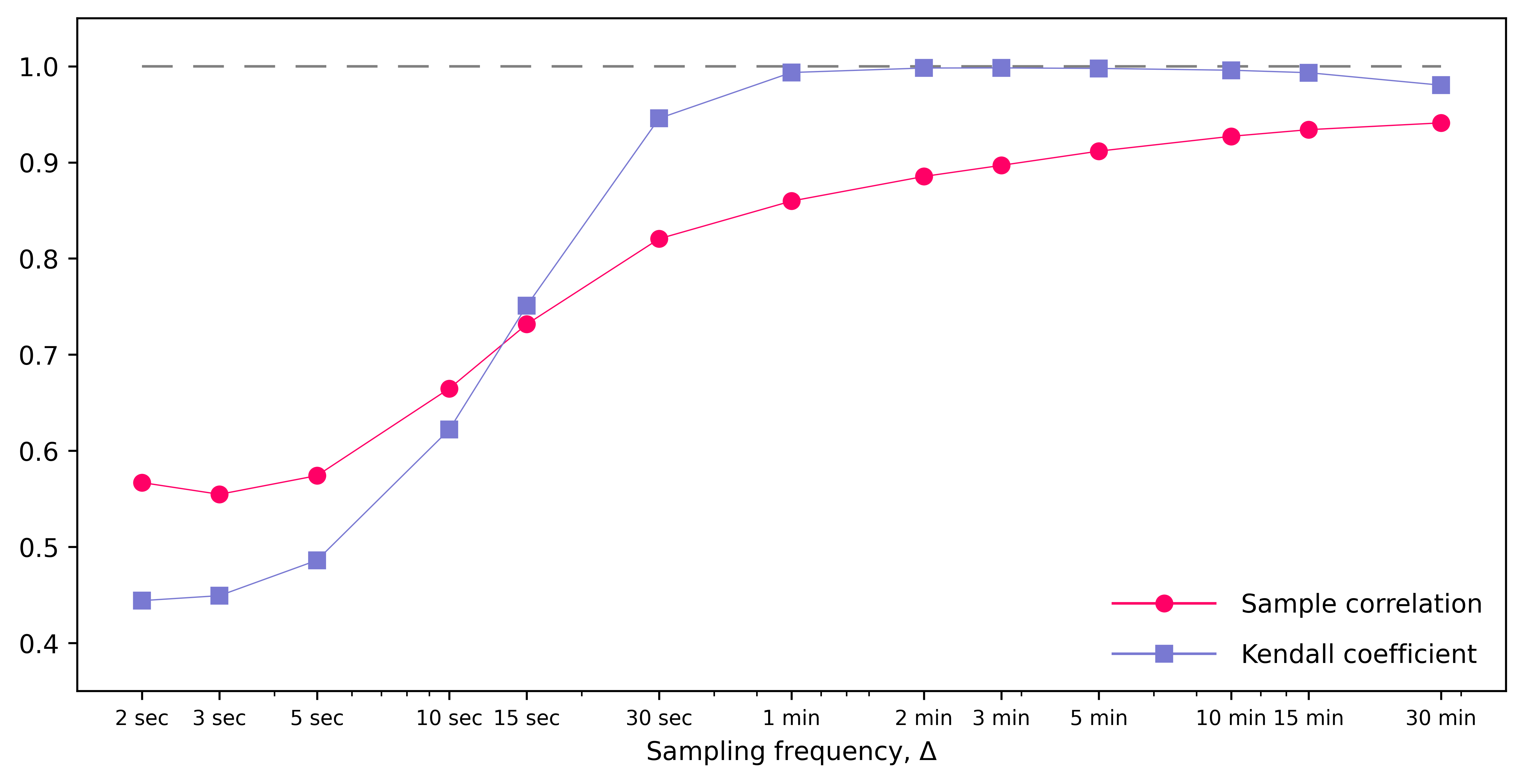}
\par\end{centering}
\caption{\footnotesize\label{fig:estimates_match} Proportions of daily realized
correlations (red circle marks) and Kendall correlation coefficients
(blue square marks) which lie inside the robust 95\% intervals constructed
using the robust similarity estimator. The results are based on $845,670$
estimated daily pairwise correlations. Both the correlation estimates
and the robust intervals are constructed with intra-daily asset returns
at $13$ different sampling frequencies.}
\end{figure}

We also implement a more systematic analysis in order to inspect how
well the proposed robust intervals agree with the standard estimators
when applied to financial correlations. For this we use intraday returns
for $21$ liquid assets from different sectors (Energy, Industrials,
Materials, Consumer Staples, Finance, Information Technology, and
Communication Services) for the period between January 2005 and December
2020 comprising 4,027 trading days. In total, our analysis is based
on $845,670$ pairwise daily correlations, and for each we consider
13 different sampling frequencies ranging from $\Delta$=2 sec ($T$=11,700
observations) to $\Delta$=30 min ($T$=13 observations). For each
frequency $\Delta$, we calculate the proportion of realized correlations
and Kendall correlations which fall into 95\% robust intervals estimated
with $\hat{\gamma}_{ML*}$.

The results are presented in Figure \ref{fig:estimates_match}. Thus,
for moderate sampling frequencies ($\Delta\geq$30 sec), the correlations
obtained with the robust Kendall estimator show extremely good agreement
with the robust intervals, and the proportion of matches approaches
$1$. In contrast, for the same range of $\Delta$, the corresponding
proportion for the realized correlation estimator is sufficiently
lower (up to $20$ percentage points below the corresponding results
of the Kendall estimator) and tends to decrease steadily as the frequency
increases. This pattern is intuitive due to intrinsically non-robust
design of the sample correlation estimator. A poor agreement exhibited
by both estimators at ultra high frequencies ($\Delta\leq$15 sec)
is expected. The market microstructural effects, such as price discreteness
(due to fixed increment size) or price staleness (due to insufficient
liquidity), become especially prominent relative to the genuine price
signals (see \citet{Hansen_Lunde_2006}, \citet{bandi2017excess},
among others), thus compromising the elliptical assumption for the
return distribution. As a result, all considered estimators may converge
to different quantities. Given that the estimated intervals get more
narrow at higher frequencies because of high $T$, it becomes increasingly
more likely that the point correlation estimates occur outside these
intervals. Based on this evidence, we cautiously conclude the robust
similarity estimator works reliably with intraday returns of sufficiently
liquid stocks at moderately high frequencies ($\Delta\geq$30 sec),
while may be unstable at ultra high frequencies.

We emphasize that the presented analysis serves rather for experimental
and illustration purposes. The assumption of independent and identically
distributed observations can hardly be justified due to stochastic
volatility, market microstructural effects, and other stylized artifacts
which are typically attributed to high frequency returns. Moreover,
we may expect that the correlation level change over a trading day,
therefore the estimates should rather be interpreted as measures of
some average daily correlation. A more careful adaptation of the similarity
estimator for high frequency financial data is a challenging and interesting
question which is a subject of ongoing work.

\subsection{A Robust Multivariate GARCH Model}

We develop a new specification of the multivariate GARCH model that
naturally accommodates the empirical similarity measure introduced
in Section \ref{sec:Measuring-Similarity-for}. Let $R_{t}=(R_{1,t},R_{2,t},...,R_{n,t})^{\prime}$
be an $n$-dimensional vector of asset returns, for $n\geq2$, observed
at discrete time moments $t=1,...,T$, and let $\{\mathcal{F}_{t}\}$
denote the natural filtration for $R_{t}$. We assume that, conditionally
on $\mathcal{F}_{t-1}$, the return vector follows an elliptical distribution
with a constant location vector $\mu=(\mu_{1},\dots,\mu_{n})'$ and
a positive definite conditional \textit{scatter} matrix $S_{t}$,
which is the key object of interest. Following the logic of the Dynamic
Conditional Correlation (DCC) approach of \citet{Engle_2002}, we
decompose $S_{t}$ as
\[
S_{t}=\Lambda_{t}^{1/2}C_{t}\Lambda_{t}^{1/2},
\]
where $\Lambda_{t}=\text{diag}(h_{1,t},h_{2,t},...,h_{n,t})^{\prime}$
collects the conditional (squared) scale parameters of the individual
returns, $\ensuremath{h_{i,t}=[S_{t}]_{ii}}$, for $i=1,...,n$, and
$C_{t}$ is the positive definite matrix of conditional scatter-correlation
parameters implied by $S_{t}$. Whenever the conditional distribution
possesses finite second moments, the conditional covariance matrix
exists and is proportional to the scatter,
\[
\big(R_{t}\big|\mathcal{F}_{t-1}\big)=m_{g}S_{t},
\]
where the constant $m_{g}>0$ is determined by the elliptical generator
($m_{g}=1$ in the Gaussian case, and $m_{g}=\nu/(\nu-2)$ for the
Student's $t$ with $\nu>2$ degrees of freedom). In that case, $C_{t}$
coincides with the conditional correlation matrix $\text{corr}(R_{t}|\mathcal{F}_{t-1})$,
and $h_{i,t}$ is proportional to the conditional variance of $R_{i,t}$.
For heavier-tailed generators without second moments, such as the
symmetric multivariate stable, the scatter matrix $S_{t}$ and the
matrix $C_{t}$ remain well defined, and $C_{t}$ retains its interpretation
as the matrix of correlation parameters of the elliptical family.
This structure allows us to split the modeling of $S_{t}$ into the
separate modeling of the conditional scales and of the conditional
correlations.

We assume that all conditional locations are constant, which is a
standard assumption in the GARCH literature, and formulate the return
equations as
\begin{equation}
R_{i,t}=\mu_{i}+h_{i,t}^{1/2}z_{i,t},\qquad i=1,...,n,\quad\quad t=1,...,T,\label{eq:ret_eq}
\end{equation}
where $z_{t}=(z_{1,t},z_{2,t},...,z_{n,t})^{\prime}=\Lambda_{t}^{-\frac{1}{2}}(R_{t}-\mu)$
denotes the vector of standardized returns. Conditionally on $\mathcal{F}_{t-1}$,
$z_{t}$ is elliptical with zero location and scatter matrix $C_{t}$;
in particular, the components of $z_{t}$ have \textit{unit conditional
scales}, so the conditional scatter of $z_{t}$ is homogeneous by
construction (and each $z_{i,t}$ has conditional variance $m_{g}$,
common across $i$, whenever second moments exist). Therefore, $z_{t}$
carries the information about the conditional correlation structure
of the raw returns and, under the equicorrelation assumption imposed
below, satisfies exactly the assumptions of Proposition \ref{prop:P4},
conditionally on $\mathcal{F}_{t-1}$, with no moment requirements
on the generator. The conditional scale parameters $h_{i,t}$, for
$i=1,\dots,n$, can be modeled with any appropriate univariate GARCH-type
filter. In our empirical application, we use the EGARCH(1,1) model
by \citet{Nelson_1991}. Since daily returns are typically fat-tailed,
the conditional elliptical family may be chosen to allow for excess
kurtosis, such as the Student's $t$ or the symmetric stable distribution.\footnote{The elliptical assumption leaves the radial component unrestricted;
a parametric choice (e.g., the degrees of freedom of a Student's $t$)
is needed only if full-likelihood estimation of the univariate filters
is desired.}

In this paper, we restrict our analysis to an equicorrelation assumption
for $C_{t}$, such that all conditional correlations are identical
and parametrized by a single dynamic coefficient, $\rho_{t}$. This
structure implies a very parsimonious model specification which is
readily scaled for arbitrary large dimension $n$ and allows to elegantly
employ results in Section \ref{subsec:Equicorrelation-Scenario}.
The equicorrelation assumption for $C_{t}$ was originally introduced
in \citet{Engle_Kelly_2012} (known as the DECO model), and was extended
to accommodate realized measures of variances and correlations in
\citet{Archakov_Hansen_Lunde_2025}. 

The methodological novelty is related to the way how the dynamics
for conditional correlation matrix $C_{t}$ is modeled. Following
the approach in \citet{Archakov_Hansen_Lunde_2025}, we set up the
dynamics for the off-diagonal elements of the log-transformed conditional
correlation matrix, $\log C_{t}$. The novel idea is to use the similarity
measure $\phi_{r,t}$, characterized in Section \ref{sec:Measuring-Similarity-for},
as a natural signal about the current level of $\log C_{t}$. Since,
conditionally on $\mathcal{F}_{t-1}$, $z_{t}$ is elliptical with
homogeneous unit scales and equicorrelation matrix $C_{t}$, Proposition
\eqref{prop:P4} applies period by period, and $\phi_{r,t}$ constructed
with $z_{t}$ represents a robust empirical measure of the transformed
correlation coefficient.

For an equicorrelation matrix $C_{t}$, the off-diagonal entries of
$\log C_{t}$ are all identical and equal to $\phi_{\rho,t}$, and
the analytical relationship between $\phi_{\rho,t}$ and $\rho_{t}$
is provided in \eqref{eq:gamma_equi}. We formulate dynamics for $\phi_{\rho,t}$
as follows, 
\begin{equation}
\phi_{\rho,t}=\alpha+\beta\cdot\phi_{\rho,t-1}+(\kappa+\varsigma\cdot1_{\{\iota_{n}^{\prime}z_{t-1}<0\}})\cdot\underbrace{\frac{1}{n}\log\frac{z_{t-1}^{\prime}P_{n}z_{t-1}}{z_{t-1}^{\prime}P_{n}^{\perp}z_{t-1}}}_{=\phi_{r,t-1}},\label{eq:deco_garch}
\end{equation}
where $\phi_{r,t}$ is the local similarity measure introduced in
Section \ref{sec:Measuring-Similarity-for}, $P_{n}=\frac{1}{n}\iota_{n}\iota_{n}^{\prime}$
and $P_{n}^{\perp}=I_{n}-P_{n}$ are the orthogonal projection matrices,
$\iota_{n}$ is the $n$-dimensional vector of ones, and $I_{n}$
is the n-dimensional identity matrix. The structure of this recursive
equation is analogous to the classical GARCH equation for volatility.
The autoregressive term with $\beta$ accounts for persistence in
correlation dynamics, while the term with $\phi_{r,t-1}$ represents
an observation-driven innovation\footnote{The classification of time-varying parameter models into the classes
of observation-driven and parameter-driven models goes back to \citet{Cox_1981}.
See also an instructive discussion about observation-driven modeling
in \citet{Koopman_Lucas_Scharth_2016}.} for the correlation dynamics allowing for an asymmetric effect that
depends on whether the prevailing direction of stock returns is positive
or negative. Since $z_{t}$ is elliptical with homogeneous variances
and correlations, by Proposition \ref{prop:P4}, $\phi_{r,t}$ represents
a relevant signal of $\phi_{\rho,t}$ with a known constant bias term,
$\mathbb{E}(\phi_{r,t}|\mathcal{F}_{t-1})=\phi_{\rho,t}-\omega_{n}$.
Such fixed bias is harmless as it is absorbed by the constant coefficient
$\alpha$ in \eqref{eq:deco_garch}. At the same time, $\phi_{r,t}$
is robust to the presence of outliers and fat-tailed distributions
of the observed returns. This is a particularly useful feature in
modeling financial returns, where the fat tails and jumps are commonly
found among stylized empirical regularities. 

An apparent advantage of the specification in \eqref{eq:deco_garch}
is an unconstrained support of $\phi_{\rho,t}$ which allows to avoid
any additional restrictions on the dynamic process for ensuring positive
definiteness of $C_{t}$. Recall that, in the classical DCC-GARCH
models, the conditional correlation dynamics is specified through
the matrix process, such that the resulting matrix needs an extra
adjustment for positive definiteness and for the estimated conditional
correlations to remain valid (see also \citet{Aielli_2013}, \citet{Brownlees_Llorens_2023}).
We note that many extensions, which are common for the DCC-GARCH models,
can be readily incorporated into this unconstrained specification.
For example, these extensions may include correlation targeting or
incorporating realized measures of correlation in spirit of \citet{Archakov_Hansen_Lunde_2025}.

We emphasize that, unlike classical DCC-type recursions built on outer
products of standardized returns, the correlation innovation in \eqref{eq:deco_garch}
is well defined for every elliptical generator, including infinite-variance
families such as the symmetric stable or Cauchy, since it enters only
through the direction of $z_{t}$. In this respect the proposed specification
is robust along a dimension that the DECO benchmark, which presumes
finite second moments, does not share.

We note that for $n=2$ the model reduces to a simple bivariate DCC
model, where $\phi_{\rho,t}$ represents a Fisher transform of the
conditional correlation coefficient, while the empirical resemblance
measure $\phi_{r,t}$ is given by \eqref{eq:phi_r}. For the bivariate
case, the model becomes closely related to the angular DCC model in
\citet{Jarjour_Chan_2020}, where the conditional correlation dynamics
was specified for non-transformed $\rho_{t}$, and $r_{t}=\frac{2z_{1,t}z_{2,t}}{z_{1,t}^{2}+z_{2,t}^{2}}$
was used as a dynamic innovation term. Although $r_{t}$ is a robust
signal of correlation, it is not an unbiased measure of $\rho_{t}$
because, in general, $\mathbb{E}(r_{t}|\mathcal{F}_{t-1})\neq\rho_{t}$.
Furthermore, extra restrictions for the parameter coefficients must
be imposed to ensure positive definiteness. In light of this, the
specification in \eqref{eq:deco_garch} appears to be a more natural
and convenient approach to modeling dynamic correlations.

An important advantage of the DCC structure is that the model can
be effectively estimated in two steps. In the first step, the univariate
GARCH models are estimated individually for $i=1,...,n$, and the
standardized returns, $\hat{z}_{t}$, are obtained. Note that since
the correlation recursion is invariant to the common normalization
of $\hat{z}_{t}$, the first-step filters may be estimated by Gaussian
QML without affecting the second step. In the second step, the conditional
correlation dynamics of $C_{t}$, given in \eqref{eq:deco_garch},
is estimated by using $\hat{z}_{t}$ obtained in the first step. Such
two-stage estimation can facilitate the computation complexity dramatically,
especially if high dimensional return vectors are modeled.

The equicorrelation assumption imposes a tight constraint on $C_{t}$,
which is rarely empirically plausible. This structure, however, provides
a substantial dimension reduction since, instead of modeling $\frac{n(n-1)}{2}$
dynamic correlations, we effectively model only a single common correlation
coefficient. The estimated dynamics can be interpreted as a time-varying
average correlation level, or an index of market co-movement, and
can serve as a useful state variable in various contexts. For example,
when a sufficiently representative sample of assets is available,
$\phi_{\rho,t}$ can be used as a barometer of diversification potential
for portfolio management purposes, or as an aggregate correlation
index which can be informative about overall market uncertainty and
risk.

\begin{figure}[ph]
\begin{centering}
\includegraphics[scale=0.65]{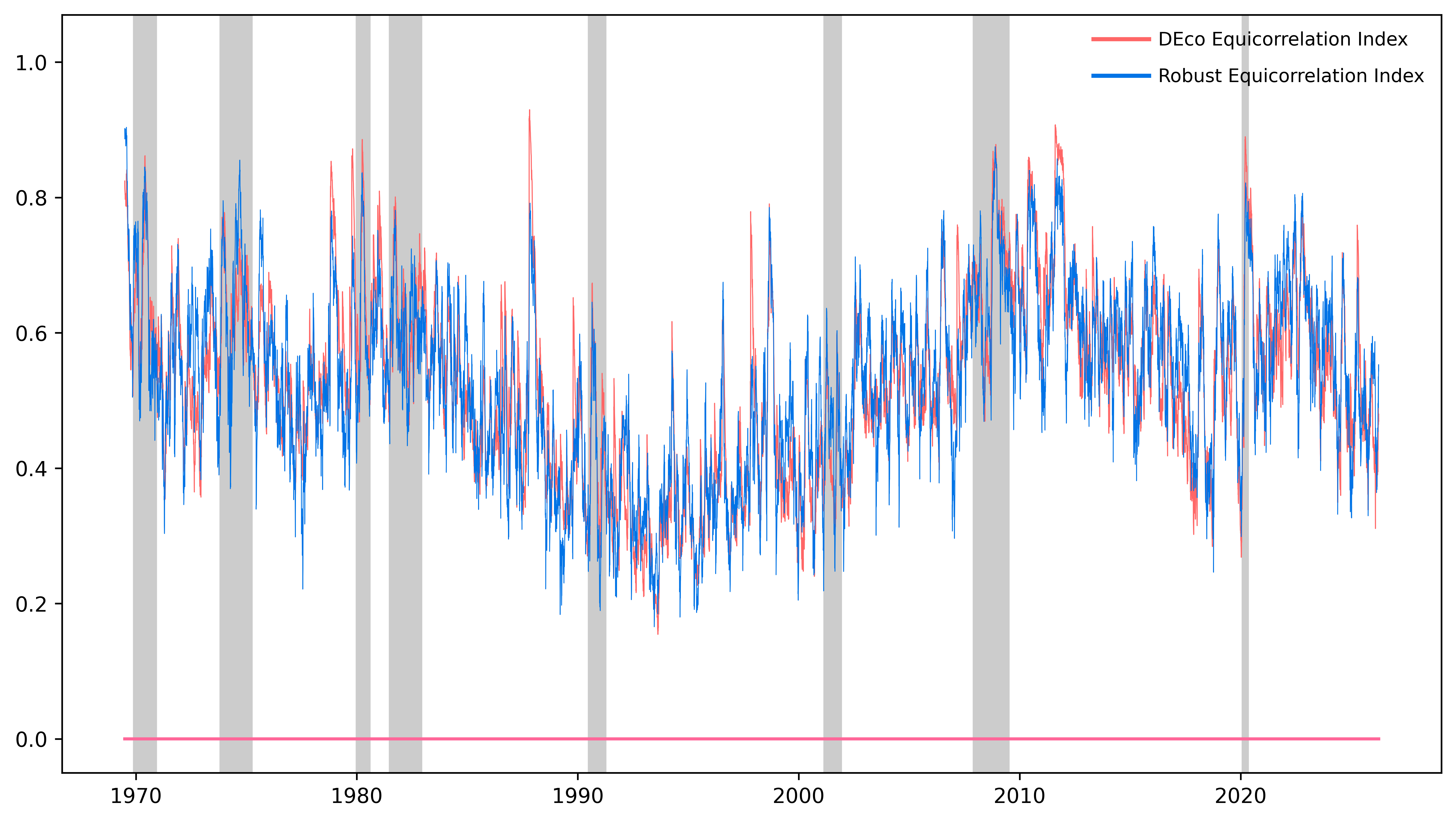}
\par\end{centering}
\caption{\footnotesize\label{fig:rmg_index} Equicorrelation Multivariate
GARCH models estimated with daily returns on 49 U.S. industry portfolios
from the Kenneth French data library. The conditional equicorrelation
dynamics estimated with the new robust specification (blue line) and
the equicorrelation dynamics estimated with the standard DECO model
(red line) are shown for the period between July 1969 and March 2026
($14,309$ trading days).}
\end{figure}

For illustration purposes, we estimate the new model with daily close-to-close
value-weighted returns on $49$ industry portfolios from the Kenneth
French Data Library. The estimation period spans July 1969 to March
2026, comprising a total of 14,309 trading days. The results are illustrated
in Figure \ref{fig:rmg_index}. The conditional equicorrelation index
estimated with the new robust model is shown with a blue line, while
the index estimated with a standard DECO model by \citet{Engle_Kelly_2012}
is shown in red. We observe that both indices exhibit very similar
low-frequency cyclical trajectory of the common correlation level
which globally ranges from about $0.2$ to $0.8$. Interestingly,
the figure suggests that the economic downturns widely considered
as global, such as recession of 1973-75, early 1980s recession, and
the Great Recession in 2007-09 coupled with the European Debt Crisis
2010-12, appear to be the major turning points when the correlation
level shifts into a long-term contraction after reaching the peak
($\rho_{t}\approx0.8$). 

The discrepancy between two estimated correlations comes through rather
short-run dynamics, where the indices may demonstrate visibly different
magnitudes. For example, we observe that the new correlation index
exhibits a more modest response to fleeting or relatively short-lived
events such as the \textquotedbl Black Monday\textquotedbl{} (October
19, 1987), the \textquotedbl Mini-crash\textquotedbl{} (October 27,
1997), and the outbreak of COVID-19 (February-March, 2020), where
many stocks exhibited co-directional extreme returns over a local
period of time. This evidence points to distinctive informational
content in the new robust estimates of conditional correlations, and
its evaluation suggests a promising direction for further empirical
analysis. 

\section{Conclusion}

We consider an empirical measure of association that reflects the
statistical similarity between random variables and is grounded in
the ideas of \citet{Thorndike_1905} and \citet{Fisher_1919}. The
similarity is defined as the relative extent of joint variation along
the direction of the vector of ones, which can be viewed as the direction
of perfect similarity in both sign and magnitude. In this paper, we
adapt the similarity measure to the robust estimation of the correlation
coefficient and to the corresponding inference. We propose both sample-average
and maximum-likelihood versions of the estimator, which are intrinsically
insensitive to extreme observations arising from fat-tailed distributions
and outliers.

We analyze the statistical properties of the proposed class of estimators
for elliptical random vectors. In the bivariate case, when the variables
are scale homogeneous, the estimators are consistent for the Fisher
transformation of the correlation coefficient; the sample-average
version admits an exact finite-sample distribution that is invariant
over the elliptical class, while the maximum-likelihood version attains
the inverse Fisher information of the hyperbolic secant family. In
the more realistic setting where the variables may be heteroskedastic,
the variables must first be re-scaled before the estimator is applied.
We therefore develop a two-stage maximum-likelihood estimator, based
on a preliminary re-scaling by the median of the observed log-ratios,
and show that it retains the robustness and asymptotic properties
of the corresponding homogeneous-scale estimator.

In contrast to many popular robust correlation estimators, which primarily
stabilize point estimation, the proposed estimators also provide a
robust and invariant sampling distribution. Consequently, they can
be used to construct robust confidence intervals in the presence of
noisy and/or heavy-tailed data, to perform two-sample tests for the
equality of correlations, to develop CUSUM-type statistics for robust
real-time monitoring of correlation levels, and to support a wide
range of other robust inference applications. Another promising direction
is the development of a new robust regression framework, in which
$\tanh(\gamma)\cdot e^{\eta}$ provides a natural robust analogue
of the slope coefficient.

The intrinsic robustness of the similarity estimator comes at the
cost of reduced efficiency. This motivates the development of efficiency-improving
modifications based, for example, on subsampling methods. This research
direction is particularly promising in the context of intraday financial
data, where the large number of high-frequency observations can be
further exploited. Another potentially useful methodological contribution
lies in the development of new methods for the composite estimation
of large-scale correlation matrices. For instance, the individual
correlation coefficients could first be estimated separately using
the robust similarity estimator, after which the resulting matrix
could be projected onto the space of positive definite correlation
matrices according to an appropriate optimality criterion. We leave
these topics for future research.

The similarity measure extends to multiple (more than two) variables,
where it can be interpreted as a robust measure of the overall correlation
level within the system. We analyze a tractable special case in which
scales and correlations are homogeneous, and show that the robust
similarity estimator naturally generalizes to a consistent estimator
of the common coefficient of the log-transformed correlation matrix,
with an exact and invariant sampling distribution. This result establishes
an interesting connection with the Generalized Fisher Transformation
(GFT) for correlation matrices introduced in \citet{Archakov_Hansen_2021}
and its application to block-regularized correlation structures; see
\citet{Archakov_Hansen_2024}.

We illustrate the empirical performance of the robust similarity estimator
using intraday stock returns. We perform robust interval estimation
for more than $800,000$ pairwise daily correlations, each at thirteen
sampling frequencies. The resulting estimates are stable, persistent,
and in good agreement with the point correlation estimates obtained
with a popular robust alternative, the Kendall estimator, at all but
the highest sampling frequencies. These findings suggest that the
similarity estimator provides a reliable tool for the estimation of,
and inference on, financial correlations using high-frequency data.
It may be particularly useful for analyzing especially noisy asset
classes, such as cryptocurrencies.

As an econometric application, we develop a novel robust multivariate
GARCH model in which the conditional correlation process is modeled
under the matrix logarithm transformation, which naturally accommodates
the similarity measure. The proposed specification preserves the positive
definiteness of the filtered correlation matrix while ensuring robustness
in the presence of fat tails and outliers. In financial econometric
applications, the model may be useful for the robust estimation of
market betas (in the bivariate case) and for the regularized estimation
of high-dimensional dynamic correlation matrices (in the multivariate
case). A straightforward extension would be to incorporate robust
realized measures of correlation, computed with the similarity estimator
from intraday data, in the spirit of \citet{Archakov_Hansen_Lunde_2025}.
Furthermore, the similarity estimator may also be suitable for modeling
dynamic correlations within the score-driven framework proposed by
\citet{CKL13}, or in parameter-driven models such as state-space
or stochastic volatility models. We leave these avenues for future
research.

\bibliographystyle{apalike}
\bibliography{literature}

\noindent{\small\newpage}{\small\par}

\appendix

\section*{Appendix }

\setcounter{equation}{0}\renewcommand{\theequation}{A.\arabic{equation}}
\setcounter{lem}{0}\renewcommand{\thelem}{A.\arabic{lem}}
\begin{lem}
\label{lem:1} Let $x=(x_{1},x_{2},...,x_{n})^{\prime}$ is a $n$-variate
random vector which has an elliptical distribution with zero location
and positive-definite scatter matrix $\Sigma$, so its probability
density function can be expressed as $f_{x}(x)=g(x^{\prime}\Sigma^{-1}x)$,
for $x\in\mathbb{R}^{n}$, where $g(t)$ is some non-negative valued
function. Then, the following identity holds,
\[
\int_{0}^{+\infty}t^{\frac{n}{2}-1}g(t)dt=\frac{1}{\sqrt{\pi^{n}|\Sigma|}}\Gamma\Bigl(\frac{n}{2}\Bigl),
\]
where $|\Sigma|$ denotes the determinant of $\Sigma$.
\end{lem}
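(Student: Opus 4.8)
The plan is to exploit the single fact that $f_x$ is a bona fide probability density, i.e.\ $\int_{\mathbb{R}^n} g(x'\Sigma^{-1}x)\,dx = 1$, and to reduce the left-hand side to a one-dimensional integral of exactly the stated shape. There is nothing distributional to prove beyond this normalization; the whole argument is a change of variables together with the surface-area constant of the unit sphere.

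First I would diagonalize the quadratic form by the linear substitution $y = \Sigma^{-1/2}x$ (using the symmetric positive-definite square root), so that $x = \Sigma^{1/2}y$, $dx = |\Sigma|^{1/2}\,dy$, and $x'\Sigma^{-1}x = y'y = \|y\|^2$. The normalization condition becomes
\[
|\Sigma|^{1/2}\int_{\mathbb{R}^n} g\bigl(\|y\|^2\bigr)\,dy = 1,
\]
so the task is reduced to evaluating $\int_{\mathbb{R}^n} g(\|y\|^2)\,dy$ for a radial integrand. I would then pass to spherical coordinates $y = su$ with $s = \|y\|\ge 0$ and $u\in S^{n-1}$: the volume element factors as $s^{n-1}\,ds\,d\sigma(u)$, integrating out the angular variable contributes the area $|S^{n-1}| = 2\pi^{n/2}/\Gamma(n/2)$, and the substitution $t = s^2$ (so that $s^{n-1}\,ds = \tfrac12 t^{n/2-1}\,dt$) yields
\[
\int_{\mathbb{R}^n} g\bigl(\|y\|^2\bigr)\,dy = \frac{2\pi^{n/2}}{\Gamma(n/2)}\int_0^{\infty} g(s^2)\,s^{n-1}\,ds = \frac{\pi^{n/2}}{\Gamma(n/2)}\int_0^{\infty} t^{\frac{n}{2}-1}g(t)\,dt.
\]
Combining the two displays gives $|\Sigma|^{1/2}\,\frac{\pi^{n/2}}{\Gamma(n/2)}\int_0^\infty t^{n/2-1}g(t)\,dt = 1$, and solving for the integral produces $\int_0^\infty t^{n/2-1}g(t)\,dt = \Gamma(n/2)/\sqrt{\pi^n|\Sigma|}$, which is the claim.

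I do not expect a genuine obstacle: the only non-elementary ingredient is the value $|S^{n-1}| = 2\pi^{n/2}/\Gamma(n/2)$, which is standard and, if a self-contained treatment is preferred, can be recovered by running the same computation with the Gaussian choice $g(t) = (2\pi)^{-n/2}e^{-t/2}$ and $\Sigma = I_n$, where both the normalization and the radial integral $\int_0^\infty t^{n/2-1}e^{-t/2}\,dt$ are available in closed form. The points that require a little care are purely bookkeeping: that the change of variables contributes $|\Sigma|^{1/2}$ rather than $|\Sigma|^{-1/2}$, and that the exponent arithmetic in the substitution $t = s^2$ lands on $t^{n/2-1}$.
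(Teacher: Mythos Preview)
Your proof is correct and follows essentially the same route as the paper: a linear change of variables $y=\Sigma^{-1/2}x$ to reduce to a spherically symmetric integrand, followed by spherical coordinates and the substitution $t=s^{2}$, with the normalization $\int f_{x}=1$ supplying the constant. The only cosmetic difference is that the paper writes out the hyperspherical angles and their Jacobian explicitly, whereas you invoke the surface-area formula $|S^{n-1}|=2\pi^{n/2}/\Gamma(n/2)$ directly.
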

\begin{proof}
Introduce $z=\Sigma^{-\frac{1}{2}}x$ and note that the scatter matrix
of $z$ is the identity matrix of size $n\times n$. The change of
variables from $x=(x_{1},x_{2},...,x_{n})^{\prime}$ to $z=(z_{1},z_{2},...,z_{n})^{\prime}$
leads to the transformed probability density function, $f_{z}(z)=|\Sigma|^{\frac{1}{2}}g(z^{\prime}z)$,
which is spherically symmetric around the origin. Apply now the hyper-spherical
coordinate transformation,
\[
\begin{aligned}z_{1}= & s\cdot\cos\theta_{1},\\
z_{2}= & s\cdot\sin\theta_{1}\cos\theta_{2},\\
\vdots\\
z_{n-1}= & s\cdot\sin\theta_{1}...\sin\theta_{n-2}\cos\theta_{n-1},\\
z_{n}= & s\cdot\sin\theta_{1}...\sin\theta_{n-2}\sin\theta_{n-1},
\end{aligned}
\]
where $s\geq0$, $\theta_{k}\in[0,\pi)$, for $k=1,...,n-2$, and
$\theta_{n-1}\in[0,2\pi)$, and the following identity holds $dz_{1}dz_{2}...dz_{n}=\Lambda\cdot dsd\theta_{1}...d\theta_{n-2}d\theta_{n-1}$,
where $\Lambda=s^{n-1}\sin\theta_{1}\sin^{2}\theta_{2}...\sin^{n-2}\theta_{n-2}$. 

Considering the total integral over the whole support leads to the
following condition
\[
\begin{aligned} & \int_{-\infty}^{+\infty}...\int_{-\infty}^{+\infty}f_{z}(z)dz_{1}...dz_{n} & = & |\Sigma|^{\frac{1}{2}}\int_{-\infty}^{+\infty}...\int_{-\infty}^{+\infty}g(z^{\prime}z)dz_{1}...dz_{n}\\
 &  & = & |\Sigma|^{\frac{1}{2}}\int_{0}^{2\pi}\int_{0}^{\pi}...\int_{0}^{\pi}\int_{0}^{+\infty}g(s^{2})\cdot\Lambda\cdot dsd\theta_{1}...d\theta_{n-2}d\theta_{n-1}\\
 &  & = & \frac{1}{2}|\Sigma|^{\frac{1}{2}}\int_{0}^{2\pi}d\theta_{n-1}\int_{0}^{\pi}\sin^{n-2}\theta_{n-2}d\theta_{n-2}...\int_{0}^{\pi}\sin\theta_{1}d\theta_{1}\int_{0}^{+\infty}g(s^{2})\cdot s^{n-2}ds^{2}\\
 &  & = & \frac{1}{2}\mathcal{S}_{n-1}|\Sigma|^{\frac{1}{2}}\int_{0}^{+\infty}t^{\frac{n}{2}-1}g(t)dt,
\end{aligned}
\]
where we use identity $z^{\prime}z=z_{1}^{2}+z_{2}^{2}+...+z_{n}^{2}=s^{2}$,
and $|\mathcal{S}_{n-1}|=2\pi^{\frac{n}{2}}\Gamma^{-1}\Bigl(\frac{n}{2}\Bigl)$
is the surface area of $(n-1)$-dimensional unit sphere, $\mathcal{S}_{n-1}$.
The condition stated in the lemma is obtained by equating the derived
expression to $1$.
\end{proof}
\noindent\textbf{Proof of Theorem} \ref{prop:T1}. The proof is closely
related to the similar derivation in \citet{Fisher_1919} and adds
a simple extension of the result for the general case of bivariate
elliptical distributions. 

Since $x$ has an elliptical distribution, its probability density
function can be written as $f_{x}(x)=g(x^{\prime}\Sigma^{-1}x)$,
where $x\in\mathbb{R}^{2}$ and $g(t)$ is a non-negative valued function.
\[
f_{x}(x)=g(x^{\prime}\Sigma^{-1}x)=g\Biggl(\frac{x_{1}^{2}+x_{2}^{2}-2\rho x_{1}x_{2}}{\sigma^{2}(1-\rho^{2})}\Biggl)
\]
Apply the polar coordinate transformation, $x_{1}=s\cdot\cos\theta$
and $x_{2}=s\cdot\sin\theta$, where $s\geq0$ and $\theta\in[0,2\pi)$,
and obtain the corresponding probability density function of the transformed
variables,
\[
f_{s,\theta}(s,\theta)=s\cdot g\Biggl(\frac{s^{2}}{\sigma^{2}}\cdot\frac{(1-\rho\sin2\theta)}{1-\rho^{2}}\Biggl).
\]
After the another variable change, $r=\sin2\theta$, for $\theta\in[0,\frac{\pi}{2})$,
we have
\[
f_{s,r}(s,r)=\frac{2s}{\sqrt{1-r^{2}}}\cdot g\Biggl(\frac{s^{2}}{\sigma^{2}}\cdot\frac{1-\rho r}{1-\rho^{2}}\Biggl),
\]
where we use $\frac{\partial\theta}{\partial r}=\frac{1}{2\sqrt{1-r^{2}}}$
and multiply the expression by $4$ to account for all four elementary
sectors of the polar coordinate system, $\theta\in[0,2\pi)$. We obtain
the marginal density for $r$ by integrating with respect to $s$,
\[
\begin{aligned} & f_{r}(r)=\int_{0}^{+\infty}f_{s,r}(s,r)ds & = & \frac{2}{\sqrt{1-r^{2}}}\cdot\int_{0}^{+\infty}s\cdot g\Biggl(\frac{s^{2}}{\sigma^{2}}\cdot\frac{1-\rho r}{1-\rho^{2}}\Biggl)ds\\
 &  & = & \frac{\sigma^{2}(1-\rho^{2})}{\sqrt{1-r^{2}}(1-\rho r)}\cdot\int_{0}^{+\infty}g(t)dt=\frac{1}{\pi(1-\rho r)}\cdot\sqrt{\frac{1-\rho^{2}}{1-r^{2}}}
\end{aligned}
\]
where we used the result from Lemma \ref{lem:1} for the special case
$n=2$, and $|\Sigma|=\sigma^{4}(1-\rho^{2})$. Finally, after the
Fisher transformation, $\phi_{r}=\frac{1}{2}\log\Bigl(\frac{1+r}{1-r}\Bigl)=\text{atanh}(r)$
and $\phi_{\rho}=\frac{1}{2}\log\Bigl(\frac{1+\rho}{1-\rho}\Bigl)=\text{atanh}(\rho)$,
the probability density function reads
\[
\begin{aligned} & f_{\phi_{r}}(\phi_{r}) & = & \frac{1}{\pi}\cdot\frac{1-\tanh^{2}\phi_{r}}{1-\tanh\phi_{r}\cdot\tanh\phi_{\rho}}\cdot\sqrt{\frac{1-\tanh^{2}\phi_{\rho}}{1-\tanh^{2}\phi_{r}}}=\frac{1}{\pi}\cdot\frac{\sqrt{(1-\tanh^{2}\phi_{r})(1-\tanh^{2}\phi_{\rho})}}{1-\tanh\phi_{r}\cdot\tanh\phi_{\rho}}\\
 &  & = & \frac{1}{\pi}\cdot\frac{2}{e^{\phi_{r}-\phi_{\rho}}+e^{-(\phi_{r}-\phi_{\rho})}}=\frac{1}{\pi}\cdot\text{sech}(\phi_{r}-\phi_{\rho}),
\end{aligned}
\]
where we use $\frac{\partial r}{\partial\phi_{r}}=\frac{\partial}{\partial\phi_{r}}\tanh\phi_{r}=1-\tanh^{2}\phi_{r}$.
$\square$

\noindent\textbf{Proof of Proposition} \ref{prop:P1}. For some $\gamma$,
we write $\psi(x_{t};\gamma)=\tanh(\phi_{r,t}-\gamma)$, $\Psi_{T}(\gamma)=\frac{1}{T}\sum_{t=1}^{T}\psi_{t}(x_{t};\gamma)$,
and $\Psi(\gamma)=\mathbb{E}\psi(x_{t};\gamma)$. Note that $\hat{\gamma}_{ML}$
solves $\Psi_{T}(\gamma)=0$, as the first-order condition of the
strictly concave log-likelihood $\ell(\gamma)$.

\textit{{[}Consistency of $\hat{\gamma}_{ML}${]}} Since the angular
density of $x_{t}$ is bounded (Lemma \ref{lem:1}), $\mathbb{P}(r_{t}=\pm1)=0$,
so almost surely $|\phi_{r,t}|<\infty$ for all $t$. On this event,
$\Psi_{T}$ is continuous and strictly decreasing, with $\frac{\partial}{\partial\gamma}\Psi_{T}(\gamma)=-\frac{1}{T}\sum_{t=1}^{T}\text{sech}^{2}(\phi_{r,t}-\gamma)<0$
and $\Psi_{T}(\gamma)\to\mp1$ as $\gamma\to\pm\infty$; a unique
root exists by the Intermediate Value Theorem.

By Theorem \ref{prop:T1}, the variables $Z_{t}=\phi_{r,t}-\phi_{\rho}$
are i.i.d. with density $\pi^{-1}\text{sech}(\cdot)$, for every elliptical
generator. The population score $\Psi(\gamma)$ is strictly decreasing,
and $\Psi(\phi_{\rho})=\mathbb{E}\tanh(Z_{t})=0$ by the symmetry
of $Z_{t}$ and the oddness of $\tanh(\cdot)$; hence $\phi_{\rho}$
is its unique zero. Since $|\psi(\gamma)|\le1$ and $|\partial\psi/\partial\gamma|\le1$,
the class $\{\psi\}$ is uniformly bounded and Lipschitz, so the Uniform
LLN (\citet{NeweyMcFadden1994}, Lemma 2.4) gives $\sup_{\gamma\in K}|\Psi_{T}(\gamma)-\Psi(\gamma)|\overset{a.s.}{\rightarrow}0$
on compacts. Fix $\varepsilon>0$: then $\Psi(\phi_{\rho}-\varepsilon)>0>\Psi(\phi_{\rho}+\varepsilon)$,
so a.s. for all $T$ large, $\Psi_{T}(\phi_{\rho}-\varepsilon)>0>\Psi_{T}(\phi_{\rho}+\varepsilon)$,
and the strict monotonicity of $\Psi_{T}$ places its root in $(\phi_{\rho}-\varepsilon,\phi_{\rho}+\varepsilon)$.
As $\varepsilon$ was arbitrary, $\hat{\gamma}_{ML}\overset{a.s.}{\rightarrow}\phi_{\rho}$.

\textit{{[}Asymptotic distribution of $\hat{\gamma}_{ML}${]}} By
the mean value theorem, for some $\bar{\gamma}$ between $\hat{\gamma}_{ML}$
and $\phi_{\rho}$,
\[
0=\Psi_{T}(\phi_{\rho})-\Big[\frac{1}{T}\sum_{t=1}^{T}\text{sech}^{2}\big(\phi_{r,t}-\bar{\gamma}\big)\Big]\cdot\big(\hat{\gamma}_{ML}-\phi_{\rho}\big).
\]
By the Uniform LLN for the (bounded, Lipschitz) class $\{\text{sech}^{2}(\cdot)\}$,
continuity of its limit, and $\bar{\gamma}\overset{a.s.}{\rightarrow}\phi_{\rho}$,
the bracketed term converges a.s. to
\[
\mathbb{E}\text{sech}^{2}(Z_{t})=\frac{1}{\pi}\int_{-\infty}^{+\infty}\text{sech}^{3}dz=\frac{1}{2},
\]
The summands $\tanh(Z_{t})$ are i.i.d., bounded, have mean zero by
symmetry and oddness, and variance

\[
\mathbb{E}\tanh^{2}(Z_{t})=\frac{1}{\pi}\int_{-\infty}^{+\infty}\text{sech}z(1-\text{sech}^{2}z)dz=\frac{1}{\pi}\Bigl(\pi-\frac{\pi}{2}\Bigl)=\frac{1}{2},
\]
using standard integration results $\int_{-\infty}^{+\infty}\text{sech}zdz=\pi$
and $\int_{-\infty}^{+\infty}\text{sech}^{3}zdz=\frac{\pi}{2}$; see
\citet{gradshteyn1996table}. By the Lindeberg-L\'{e}vy CLT and the
Slutsky's Theorem, we obtain
\[
\sqrt{T}(\hat{\gamma}_{ML}-\phi_{\rho})=\Bigl[\frac{1}{2}+o_{p}(1)\Bigl]^{-1}\sqrt{T}\Psi_{T}(\phi_{\rho})\overset{d}{\rightarrow}\mathcal{N}(0,2).
\]
The equality $\mathbb{E}\tanh^{2}(Z_{t})=\mathbb{E}\text{sech}^{2}(Z_{t})=\tfrac{1}{2}=\mathcal{I}(\phi_{\rho})$
is the information equality: the hyperbolic secant location model
is correct for every elliptical generator, so the limit variance equals
$\mathcal{I}^{-1}(\phi_{\rho})=2$. $\square$

\noindent\textbf{Proof of Proposition} \ref{prop:P2}. We start with
introducing the orthogonal vectors $\iota_{+}=(1,1)^{\prime}$ and
$\iota_{-}=(1,-1)^{\prime}$, and note that $\phi_{r}=\frac{1}{2}\log\Bigl(\frac{1+r}{1-r}\Bigl)=\frac{1}{2}\log(\iota_{+}^{\prime}x)^{2}-\frac{1}{2}\log(\iota_{-}^{\prime}x)^{2}$.
We can use the following stochastic representation for the elliptical
vector $x=(x_{1},x_{2})^{\prime}$,
\[
x\overset{d}{=}R\cdot A\cdot u,
\]
where $R$ is a non-negative random variable, $u=(\cos\theta,\sin\theta)^{\prime}$
with $\theta$ is independent of $R$ and uniform on $[0,2\pi)$,
so, $u$ is distributed uniformly on $\mathcal{S}_{1}$, and $A$
is such that
\[
A=\left(\begin{array}{cc}
\sigma_{1} & 0\\
\frac{\sigma_{12}}{\sigma_{1}} & \sigma_{2}\sqrt{1-\bigl(\frac{\sigma_{12}}{\sigma_{1}\sigma_{2}}\bigl)^{2}}
\end{array}\right)=\left(\begin{array}{cc}
\sigma_{1} & 0\\
\sigma_{2}\rho & \sigma_{2}\sqrt{1-\rho^{2}}
\end{array}\right),\qquad\text{and}\qquad AA^{\prime}=\Sigma,
\]
where $\Sigma$ is a positive definite scatter matrix of $x$, as
given in \eqref{eq:Sigma_2x2_hetero}.

For the first term of $\phi_{r}$, we have 
\[
\begin{aligned} & \log(\iota_{+}^{\prime}x)^{2} & = & \log R^{2}+\log(\iota_{+}^{\prime}Au)^{2}\\
 &  & = & \log R^{2}+\log\Biggl(\Bigl(\sigma_{1}+\sigma_{2}\rho\Bigl)\cos\theta+\sigma_{2}\sqrt{1-\rho^{2}}\sin\theta\Biggl)\\
 &  & = & \log R^{2}+\log||\iota_{+}^{\prime}A||^{2}+\log\cos^{2}(\theta-\varphi_{1}),
\end{aligned}
\]
where $\varphi_{1}$ is such that $\cos\varphi_{1}=||\iota_{+}^{\prime}A||^{-1}\Bigl(\sigma_{1}+\sigma_{2}\rho\Bigl)$
and $\sin\varphi_{1}=||\iota_{+}^{\prime}A||^{-1}\sigma_{2}\sqrt{1-\rho^{2}}$.
Similarly, we obtain
\[
\begin{aligned} & \log(\iota_{-}^{\prime}x)^{2} & = & \log R^{2}+\log(\iota_{-}^{\prime}Au)^{2}\\
 &  & = & \log R^{2}+\log\Biggl(\Bigl(\sigma_{1}-\sigma_{2}\rho\Bigl)\cos\theta-\sigma_{2}\sqrt{1-\rho^{2}}\sin\theta\Biggl)\\
 &  & = & \log R^{2}+\log||\iota_{-}^{\prime}A||^{2}+\log\cos^{2}(\theta-\varphi_{2}),
\end{aligned}
\]
where $\varphi_{2}$ is such that $\cos\varphi_{2}=||\iota_{-}^{\prime}A||^{-1}\Bigl(\sigma_{1}-\sigma_{2}\rho\Bigl)$
and $\sin\varphi_{2}=-||\iota_{-}^{\prime}A||^{-1}\sigma_{2}\sqrt{1-\rho^{2}}$.
Then, the similarity measure can be written as follows,
\[
\phi_{r}=\frac{1}{2}\log\frac{||\iota_{+}^{\prime}A||^{2}}{||\iota_{-}^{\prime}A||^{2}}+D(\theta),
\]
where we denote $D(\theta)=\frac{1}{2}\Bigl(\log\cos^{2}(\theta-\varphi_{1})-\log\cos^{2}(\theta-\varphi_{2})\Bigl)$.
Consider the reflection $\mathcal{T}\theta=\varphi_{1}+\varphi_{2}-\theta$
($\text{mod}$$2\pi$), and note that the distribution of $\theta$,
and of $D(\theta)$, is preserved due to an isometry of the circle,
$\mathcal{T}\theta\overset{d}{=}\theta$ and $D(\mathcal{T}\theta)\overset{d}{=}D(\theta)$.
Since $\cos^{2}(\mathcal{T}\theta-\varphi_{1})=\cos^{2}(\theta-\varphi_{2})$
and $\cos^{2}(\mathcal{T}\theta-\varphi_{2})=\cos^{2}(\theta-\varphi_{1})$,
we find that $D(\mathcal{T}\theta)=-D(\theta)$, and therefore $D(\theta)\overset{d}{=}-D(\theta)$.
As a consequence, $D(\theta)$ has a symmetric distribution around
zero, and, therefore, $\phi_{r}$ is distributed symmetrically around
its mean,
\[
\mathbb{E}(\phi_{r})=\frac{1}{2}\log\frac{||\iota_{+}^{\prime}A||^{2}}{||\iota_{-}^{\prime}A||^{2}}=\frac{1}{2}\log\frac{(1+\xi)}{(1-\xi)}=\phi_{\xi},
\]
where we denote $\xi=\frac{2\sigma_{12}}{\sigma_{1}^{2}+\sigma_{2}^{2}}$.

Consider $\phi_{r}^{2}=\mathbb{E}(\phi_{r})^{2}+2\mathbb{E}(\phi_{r})\cdot D(\theta)+D(\theta)^{2}$,
and using that
\[
\mathbb{E}\log\cos^{2}(\theta-\varphi)=\frac{1}{2\pi}\int_{0}^{2\pi}\log\cos^{2}(\theta-\varphi)d\theta=\frac{2}{\pi}\int_{0}^{\pi}\log|\cos\theta|d\theta=-2\log2
\]
and
\[
\mathbb{E}\log^{2}\cos^{2}(\theta-\varphi)=\frac{1}{2\pi}\int_{0}^{2\pi}\log^{2}\cos^{2}(\theta-\varphi)d\theta=\frac{4}{\pi}\int_{0}^{\pi}\log^{2}|\cos\theta|d\theta=4\Bigl(\log^{2}2+\frac{\pi^{2}}{12}\Bigl),
\]
which are the standard integral result (see Section 4.22 in \citet{gradshteyn1996table}
for reference), we can obtain the variance of $\phi_{r}$,
\[
\begin{aligned} & V_{\phi_{r}} & = & \frac{1}{4}\mathbb{E}\log^{2}\cos^{2}(\theta-\varphi_{1})-\frac{1}{2}\mathbb{E}\log\cos^{2}(\theta-\varphi_{1})\log\cos^{2}(\theta-\varphi_{2})+\frac{1}{4}\mathbb{E}\log^{2}\cos^{2}(\theta-\varphi_{2})\\
 &  & = & 2\log^{2}2+\frac{\pi^{2}}{6}-\frac{1}{4\pi}\int_{0}^{2\pi}\log\cos^{2}(\theta-\varphi_{1})\log\cos^{2}(\theta-\varphi_{2})d\theta\\
 &  & = & \frac{\pi^{2}}{6}-\sum_{k=1}^{\infty}\frac{\cos2k(\varphi_{2}-\varphi_{1})}{k^{2}},
\end{aligned}
\]
where the last equality uses the Fourier expansion $\log|\cos\theta|=-\log2-\sum_{k=1}^{\infty}(-1)^{k}\frac{\cos(2k\theta)}{k}$
together with the orthogonality of $\cos2k\theta$ on $[0,2\pi)$,
so that the cross term equals $4\log^{2}2+2\sum_{k=1}^{\infty}\frac{2k(\varphi_{2}-\varphi_{1})}{k^{2}}$.
$\square$
\begin{lem}
\label{lem:2} Let $x=(x_{1},x_{2})^{\prime}$ is a bivariate elliptical
random vector with zero location and positive-definite scatter matrix
$\Sigma$, and let $w=\log|x_{2}|-\log|x_{1}|$, which is well defined
almost surely. Then $w$ is symmetrically distributed about $\eta_{c}=\log\frac{\sigma_{2}}{\sigma_{1}}$,
and has the density
\[
f_{W}(w)=\frac{1}{\pi}\cdot\frac{\sqrt{1-\rho^{2}}\cosh(w-\eta_{c})}{\cosh^{2}(w-\eta_{c})-\rho^{2}},
\]
for $w\in\mathbb{R}$, such that $\mathbb{E}(w)=\eta_{c}$ (the expectation
exists, since $f_{w}$ has exponentially decaying tails). The density
function is strictly decreasing in $|w-\eta_{c}|$, and $f_{W}(\eta_{c})=\frac{1}{\pi\sqrt{1-\rho^{2}}}>0$.
\end{lem}
\begin{proof}
Define the transformation 
\[
y=Bx,\qquad B=\left(\begin{array}{cc}
0 & \frac{\sigma_{2}}{\sigma_{1}}\\
\frac{\sigma_{1}}{\sigma_{2}} & 0
\end{array}\right),\qquad\text{such that}\quad y_{1}=\frac{\sigma_{2}}{\sigma_{1}}x_{2}\quad\text{and}\quad y_{2}=\frac{\sigma_{1}}{\sigma_{2}}x_{1}.
\]
Note that because of a linearity of the transformation $y$ is also
elliptical zero location vector, and, due to $B\Sigma B'=\Sigma$,
it has the same distribution as $x$. Therefore, $w(x)\overset{d}{=}w(y)$.
Since $w(y)=\log|y_{2}|-\log|y_{1}|=\log|x_{1}|-\log|x_{2}|+\log\frac{\sigma_{2}}{\sigma_{1}}=-w(x)+\eta_{c}$,
we have that $W\overset{d}{=}\eta_{c}-W$, i.e. $W$ is symmetrically
distributed around $\eta_{c}$.

Using the same stochastic representation for $x=(x_{1},x_{2})^{\prime}$
as in the proof for Proposition \ref{prop:P2}, $x\overset{d}{=}R\cdot A\cdot u$,
and representing the uniform angular component as $u=\frac{z}{||z||}$,
where $z\sim N(0,I_{2})$, we obtain 
\[
\frac{x_{2}}{x_{1}}=\frac{[Az]_{2}}{[Az]_{1}}=\frac{\sigma_{2}}{\sigma_{1}}\Bigl(\rho+\sqrt{1-\rho^{2}}\frac{z_{2}}{z_{1}}\Bigl),
\]
so the radial component $R$ is canceled out. Since $\frac{z_{2}}{z_{1}}\sim\text{Cauchy}(0,1)$,
and $\frac{\sigma_{2}}{\sigma_{1}}=e^{\eta_{c}}$, the density of
$h=\frac{x_{2}}{x_{1}}$ is derived as 
\[
f_{H}(h)=\frac{e^{\eta_{c}}\sqrt{1-\rho^{2}}}{\pi\Bigl[(h-e^{\eta_{c}}\rho)^{2}+e^{2\eta_{c}}(1-\rho^{2})\Bigl]},
\]
where $t\in\mathbb{R}$. Consider $W=\log|H|$, and note that the
inverse function consists of two branches, $H=e^{W}$ and $H=-e^{W}$.
Since $\Bigl|\frac{\partial h}{\partial w}\Bigl|=e^{w}$ for both
branches, we obtain
\[
\begin{aligned} & f_{W}(w) & = & e^{w}[f_{H}(e^{w})+f_{H}(-e^{w})]\\
 &  & = & \frac{\mu(w)\sqrt{1-\rho^{2}}}{\pi}\Bigl[\frac{1}{\mu(w)^{2}-2\rho\mu(w)+1}+\frac{1}{\mu^{2}+2\rho\mu(w)+1}\Bigl]\\
 &  & = & \frac{2\mu(w)\sqrt{1-\rho^{2}}(\mu(w)^{2}+1)}{\pi\Bigl[(\mu(w)^{2}+1)^{2}-4\rho^{2}\mu(w)^{2}\Bigl]},
\end{aligned}
\]
where we denote $\mu(w)=e^{w-\eta_{c}}$. Recall that $\cosh z=\frac{1}{2}\Bigl(e^{z}+\frac{1}{e^{z}}\Bigl)$,
and the expression for $f_{W}(w)$ stated in the Lemma follows.

Since $\text{sech}(w-\eta_{c})$ is strictly decreasing in $|w-\eta_{c}|$,
and due to $f_{W}(w)\sim\frac{\text{sech}(w-\eta_{c})}{1-\rho^{2}\text{sech}^{2}(w-\eta_{c})}$,
which is strictly increasing in $\text{sech}(w-\eta_{c})$, we have
that $f_{W}(w)$ is strictly decreasing in $|w-\eta_{c}|$. Therefore,
$f_{W}(w)$ attains its maximum at $w=\eta_{c}$ with $f_{W}(\eta_{c})=\frac{1}{\pi\sqrt{1-\rho^{2}}}$. 
\end{proof}
\noindent\textbf{Proof of Proposition} \ref{prop:P3}. We split our
proof into two main parts. The first part is related to the properties
of $\hat{\eta}$, while the second part is about $\hat{\gamma}_{ML*}$.

\textbf{\textit{The first stage}}\textbf{.} Estimator $\hat{\eta}=\underset{t=1,...,T}{\text{median}}(w_{t})$
is defined equivalently as a solution to optimization problem $\min_{\eta}\sum_{t=1}^{T}|w_{t}-\eta|$.
The problem is convex and coercive, so a minimizer exists. The set
of solutions represents a set of medians of $\{w_{t}\}$. Since $w_{t}$,
$t=1,...,T$, are i.i.d. with a continuous density, all $w_{t}$ are
distinct. For $T$ odd the minimizer is unique almost surely and equals
the order statistic $w_{(\lceil T/2\rceil)}$; for $T$ even the set
of minimizers is the interval $[w_{(T/2)},w_{(T/2+1)}]$, and we adopt
the conventional midpoint, which almost surely coincides with no observation.
All subsequent results hold for any measurable selection from the
minimizing set.

Since the distribution of $w_{t}$ is symmetric about $\eta_{c}$
and strictly positive in a neighborhood of $\eta_{c}$, as shown in
Lemma \ref{lem:2}, $\eta_{c}$ is the unique population median. Therefore,
the sample median $\hat{\eta}$ is strongly consistent for its unique
population counterpart, $\eta_{c}$; see, for example, \citet{serfling1980approximation}.
Also, the classical sample median CLT holds,
\[
\sqrt{T}(\hat{\eta}-\eta_{c})\overset{d}{\rightarrow}\mathcal{N}\Bigl(0,\frac{1}{4}f_{W}^{-2}(\eta_{c})\Bigl)=\mathcal{N}\Bigl(0,\frac{\pi^{2}}{4}(1-\rho^{2})\Bigl),
\]
and $\sqrt{T}(\hat{\eta}-\eta_{c})=O_{p}(1)$.

\textbf{\textit{The second stage}}\textbf{.} Only two properties of
$\hat{\eta}$ are used below: $\hat{\eta}\xrightarrow{a.s.}\eta_{c}$
and $\sqrt{T}(\hat{\eta}-\eta_{c})=O_{p}(1)$; the results therefore
hold for any first-stage estimator with these properties, for instance
one based on robust scale estimates. 

For some $\eta$ and $\gamma$, we denote $\psi(x_{t};\eta,\gamma)=\tanh(\tilde{\phi}_{r,t}(\eta)-\gamma)$,
$\Psi_{T}(\gamma;\eta)=\frac{1}{T}\sum_{t=1}^{T}\psi(x_{t};\eta,\gamma)$,
and $\Psi(\gamma;\eta)=\mathbb{E}\psi(x_{t};\eta,\gamma)$.

Note that even at observations for which $\tilde{r}_{t}(\eta)=\pm1$
and $\tilde{\phi}_{r,t}(\eta)$ diverges, $\psi(x_{t};\eta,\gamma)\in[-1,1]$
remains well defined and bounded. Also note that $\psi(x_{t};\eta,\gamma)$
has bounded derivatives, $\Bigl|\frac{\partial\psi}{\partial\gamma}\Bigl|=\text{sech}^{2}(\tilde{\phi}_{r}-\gamma)\leq1$
and 
\[
\begin{aligned} & \Bigl|\frac{\partial\psi}{\partial\eta}\Bigl| & = & \text{sech}^{2}(\tilde{\phi}_{r}-\gamma)\cdot\Bigl|\frac{\partial\tilde{\phi}_{r}}{\partial\eta}\Bigl|=\text{sech}^{2}(\tilde{\phi}_{r}-\gamma)\cdot\Biggl|\frac{2\tilde{x}_{1}(\eta)\tilde{x}_{2}(\eta)}{\tilde{x}_{1}^{2}(\eta)-\tilde{x}_{2}^{2}(\eta)}\Biggl|\\
 &  & = & \text{sech}^{2}(\tilde{\phi}_{r}-\gamma)\cdot\Biggl|\frac{\tilde{r}(\eta)}{\bigl|\sqrt{1-\tilde{r}^{2}(\eta)}\bigl|}\Biggl|=\text{sech}^{2}(\tilde{\phi}_{r}-\gamma)\cdot|\sinh(\tilde{\phi}_{r})|\leq2e^{2|\gamma|}e^{-|\tilde{\phi}_{r}|},
\end{aligned}
\]
where the last inequality uses $\text{sech}(z)\leq2e^{-|z|}$, $|\sinh(z)|\leq\frac{1}{2}e^{|z|}$
and $|\tilde{\phi}_{r}-\gamma|\geq|\tilde{\phi}_{r}|-|\gamma|$. The
latter bound vanishes as $|\tilde{\phi}_{r,t}|\rightarrow\infty$,
so $\eta\rightarrow\psi(x_{t};\eta,\gamma)$ is continuously differentiable
across the isolated points at which $\tilde{r}_{t}(\eta)=\pm1$. Consequently,
the classes $\{\psi\}$, $\{\frac{\partial\psi}{\partial\gamma}\}$,
$\{\frac{\partial\psi}{\partial\eta}\}$ are uniformly bounded and
Lipschitz continuous on a compact $(\eta,\gamma)$-set. Thus the Uniform
LLN can be applied to each (see \citet{NeweyMcFadden1994}, Lemma
2.4), with continuous limits and with expectation and differentiation
interchangeable by dominated convergence.

\textit{{[}Consistency of $\hat{\gamma}_{ML*}${]}} We begin by showing
that $\Psi_{T}(\gamma;\hat{\eta})=0$ has a unique root. If $T$ is
odd, $\hat{\eta}$ coincides with exactly one observation, the order
statistic $w_{(\lceil T/2\rceil)}$, due to $w_{t}$ are all distinct
almost surely for $t=1,...,T$. For this observation, $|\tilde{r}_{t}(\hat{\eta})|=1$,
and it contributes $\mp T^{-1}$ to the summation in $\Psi_{T}(\gamma;\hat{\eta})$.
If $T\geq3$, there are at least two observations for which $|\tilde{r}_{t}(\hat{\eta})|<1$,
so
\[
\frac{\partial}{\partial\gamma}\Psi_{T}(\gamma;\hat{\eta})=-\frac{1}{T}\sum_{t=1}^{T}\text{sech}^{2}(\tilde{\phi}_{r,t}(\hat{\eta})-\gamma)<0,
\]
for any $\gamma$, where the degenerate summand is constant in $\gamma$
and bounded by one in absolute value, while every non-degenerate summand
tends to $\mp1$ as $\gamma\to\pm\infty$; hence $\lim\sup_{\gamma\to+\infty}\Psi_{T}\le-(T-2)/T<0$
and $\lim\inf_{\gamma\to-\infty}\Psi_{T}\ge(T-2)/T>0$ for $T\ge3$,
and the Intermediate Value Theorem applies.

For some fixed $\eta\in\mathbb{R}$, the re-scaled vector $\tilde{x}_{t}(\eta)$
has unchanged correlation $\rho$ and scale ratio $\tilde{c}(\eta)=e^{\eta}\frac{\sigma_{1}}{\sigma_{2}}$.
Denote $\xi(\eta):=\xi(\tilde{c}(\eta),\rho)$. By Proposition \ref{prop:P2},
we have $\tilde{\phi}_{r,t}(\eta)=\phi_{\xi(\eta)}(\eta)+D(\eta)$,
where the last term is distributed symmetrically about zero. Due to
$\tanh(\cdot)$ is odd and bounded, $\Psi(\phi_{\xi}(\eta);\eta)=\mathbb{E}\text{tanh}D(\eta)=0$.
Together with $\frac{\partial\Psi}{\partial\gamma}=-\mathbb{E}\text{sech}^{2}(\tilde{\phi}_{r}(\eta)-\gamma)<0$,
it implies that $\gamma(\eta)=\phi_{\xi(\eta)}(\eta)$ is the unique
zero of $\Psi(\gamma;\eta)$ for every $\eta$, and for $\eta=\eta_{c}$
the re-scaled scales are equal, so $\xi=\rho$ and $\gamma(\eta_{c})=\phi_{\rho}$.
Note that for arbitrary $\varepsilon>0$, $\Psi(\phi_{\rho}-\varepsilon;\eta_{c})>0>\Psi(\phi_{\rho}+\varepsilon;\eta_{c})$.
By the Uniform LLN, continuity of $\Psi$, and $\hat{\eta}\overset{a.s.}{\rightarrow}\eta_{c}$,
it follows that $\Psi_{T}(\phi_{\rho}\pm\varepsilon;\hat{\eta})\overset{a.s.}{\rightarrow}\Psi(\phi_{\rho}\pm\varepsilon;\eta_{c})$.
As a result, $\Psi_{T}(\phi_{\rho}-\varepsilon;\hat{\eta})>0>\Psi_{T}(\phi_{\rho}+\varepsilon;\hat{\eta})$
holds almost surely for all sufficiently large $T$, and the strict
monotonicity of $\Psi_{T}(\gamma;\hat{\eta})$ places the root within
$(\phi_{\rho}-\varepsilon,\phi_{\rho}+\varepsilon)$. Since $\varepsilon>0$
was arbitrary, $\hat{\gamma}_{ML*}\overset{a.s.}{\rightarrow}\phi_{\rho}$.

\textit{{[}Asymptotic distribution of $\hat{\gamma}_{ML*}${]}} The
Mean Value Theorem gives
\[
0=\Psi_{T}(\phi_{\rho};\hat{\eta})-\Bigl[\frac{1}{T}\sum_{t=1}^{T}\text{sech}^{2}(\tilde{\phi}_{r,t}(\hat{\eta})-\bar{\gamma})\Bigl]\cdot(\hat{\gamma}_{ML*}-\phi_{\rho}),
\]
for some $\bar{\gamma}$ between $\hat{\gamma}_{ML*}$ and $\phi_{\rho}$.
Due to the Uniform LLN for the class $\{\text{sech}^{2}(\cdot)\}$,
continuity of its limit, and $(\bar{\gamma},\hat{\eta})\overset{a.s.}{\rightarrow}(\phi_{\rho},\eta_{c})$,
we obtain 
\[
\frac{1}{T}\sum_{t=1}^{T}\text{sech}^{2}(\tilde{\phi}_{r,t}(\hat{\eta})-\bar{\gamma})\overset{a.s.}{\rightarrow}\mathbb{E}\text{sech}^{2}(\tilde{\phi}_{r}(\eta_{c})-\phi_{\rho})=\frac{1}{\pi}\int_{-\infty}^{+\infty}\text{sech}^{3}zdz=\frac{1}{2},
\]
because $\tilde{\phi}_{r}(\eta_{c})$ has the same distribution $\pi^{-1}\text{sech}(\cdot)$
as $\phi_{r}$ in the scale-homogeneous case. Therefore,
\[
\sqrt{T}(\hat{\gamma}_{ML*}-\phi_{\rho})=\Bigl[\frac{1}{2}+o_{p}(1)\Bigl]^{-1}\sqrt{T}\Psi_{T}(\phi_{\rho};\hat{\eta}).
\]
Due to $\eta\rightarrow\Psi(\phi_{\rho};\eta)$ is continuously differentiable
(a.s.), for some $\bar{\eta}$ between $\eta_{c}$ and $\hat{\eta}$,
the Mean Value Theorem gives
\[
\sqrt{T}\Psi_{T}(\phi_{\rho};\hat{\eta})=\sqrt{T}\Psi_{T}(\phi_{\rho};\eta_{c})+M_{T}(\bar{\eta})\sqrt{T}(\hat{\eta}-\eta_{c}),
\]
where $M_{T}(\bar{\eta})=\frac{1}{T}\sum_{t=1}^{T}\frac{\partial\psi(x_{t};\eta,\phi_{\rho})}{\partial\eta}$.

We now prove that $M_{T}(\bar{\eta})=o_{p}(1)$. The Uniform LLN for
the class $\{\frac{\partial\psi}{\partial\eta}\}$ and the expectation-differentiation
interchange give $\sup_{\eta\in K}\Bigl|M_{T}(\eta)-\mathbb{E}\frac{\partial\psi}{\partial\eta}(x_{t};\eta,\phi_{\rho})\Bigl|\overset{a.s.}{\rightarrow}0$
on a compact set. Let us check the claim $\mathbb{E}\frac{\partial\psi}{\partial\eta}(x_{t};\eta_{c},\phi_{\rho})=0$.
Introduce notation $G(\delta,\eta)=\mathbb{E}\tanh(\delta+D(\eta))$
and $\delta(\eta)=\phi_{\xi(\eta)}(\eta)-\phi_{\rho}$, so $\mathbb{E}\psi(x_{t};\eta,\phi_{\rho})=G(\delta(\eta),\eta)$.
First, by oddness of $\tanh(\cdot)$ and symmetry of $D(\eta)$ about
zero shown in Proposition \ref{prop:P2}, $G(0,\eta)=0$ for any $\eta$,
hence $G_{\eta}(0,\eta)=0$ for any $\eta$. Second, we show that
$\delta(\eta_{c})=0$ and $\delta^{\prime}(\eta_{c})=0$. To prove
the last equality we use identity $\xi(\eta)=\frac{2\tilde{c}(\eta)\rho}{1+\tilde{c}^{2}(\eta)}$,
and then
\[
\delta^{\prime}(\eta)=\frac{\xi^{\prime}(\eta)}{1-\xi^{2}(\eta)},\qquad\text{and}\qquad\xi^{\prime}(\eta)=2\rho\frac{\tilde{c}(\eta)(1-\tilde{c}^{2}(\eta))}{(1+\tilde{c}^{2}(\eta))^{2}},
\]
so $\xi^{\prime}(\eta_{c})=0$ because $\tilde{c}(\eta_{c})=1$, and
$\delta^{\prime}(\eta_{c})=0$, the denominator being bounded away
from zero by $|\xi(\eta)|\leq|\rho|<1$. By the chain rule,

\[
\mathbb{E}\frac{\partial\psi}{\partial\eta}(x_{t};\eta_{c},\phi_{\rho})=G_{\delta}(0,\eta_{c})\cdot\delta^{\prime}(\eta_{c})+G_{\eta}(0,\eta_{c})=0,
\]
since $G_{\delta}(0,\eta_{c})=\mathbb{E}\text{sech}^{2}D(\eta_{c})\in[0,1]$
is a bounded term and $\delta^{\prime}(\eta_{c})=G_{\eta}(0,\eta_{c})=0$
as shown above. This result reflects the following intuition. The
first term vanishes because $\eta_{c}$ maximizes $|\xi(\tilde{c},\rho)|$
over the scale ratio (the movement of the \textit{center} of the distribution
of $\tilde{\phi}_{r}(\eta)$ has zero derivative at $\eta_{c}$),
and the second because, by the symmetry of $D(\eta)$ and the oddness
of $\tanh(\cdot)$, the deformation of the \textit{shape} of the distribution
contributes nothing at any $\eta$.

Since $\mathbb{E}\frac{\partial\psi}{\partial\eta}(x_{t};\eta,\phi_{\rho})$
is continuous and $\bar{\eta}\overset{a.s.}{\rightarrow}\eta_{c}$,
the desired result follows, $M_{T}(\bar{\eta})=o_{p}(1)$. Coupled
with $\sqrt{T}(\hat{\eta}-\eta_{c})=O_{p}(1)$ from the first part
of the proof, this result gives 
\[
\sqrt{T}\Psi_{T}(\phi_{\rho};\hat{\eta})=\sqrt{T}\Psi_{T}(\phi_{\rho};\eta_{c})+o_{p}(1).
\]

At $\eta=\eta_{c}$ the scales are homogeneous, so by Theorem \ref{prop:T1}
variables $Z_{t}=\tilde{\phi}_{r,t}(\eta_{c})-\phi_{\rho}$ are i.i.d.
with density $\pi^{-1}\text{sech}(\cdot)$, for every elliptical generator.
The summands $\tanh(Z_{t})$ are i.i.d., bounded, have mean zero by
symmetry and oddness, and variance, $\mathbb{E}\tanh^{2}(Z_{t})=\frac{1}{2}$
(see the proof for Proposition \ref{prop:P1}). By the Lindeberg-L\'{e}vy
CLT, $\sqrt{T}\Psi_{T}(\phi_{\rho};\eta_{c})\overset{d}{\rightarrow}\mathcal{N}(0,\frac{1}{2})$,
and, combined with the preceding result, $\sqrt{T}\Psi_{T}(\phi_{\rho};\hat{\eta})\overset{d}{\rightarrow}\mathcal{N}(0,\frac{1}{2})$
as well.

Finally, using the latter result and the Slutsky's Theorem, we obtain

\[
\sqrt{T}(\hat{\gamma}_{ML*}-\phi_{\rho})=\Bigl[\frac{1}{2}+o_{p}(1)\Bigl]^{-1}\sqrt{T}\Psi_{T}(\phi_{\rho};\hat{\eta})\overset{d}{\rightarrow}\mathcal{N}(0,2),
\]
which concludes the proof. $\square$

\noindent\textbf{Proof of Proposition} \ref{prop:P4}. A probability
density function of an elliptically distributed random vector $x\in\mathbb{R}^{n}$
has the form $f_{x}(x)=g(x^{\prime}\Sigma^{-1}x)$ for some non-negative
valued $g(t)$. Given the homogeneity assumption for scales and correlations,
the density can be written as
\[
f_{x}(x)=g\Biggl(x^{\prime}\Sigma^{-1}x\Biggl)=g\Biggl(x^{\prime}(\lambda_{+}^{-1}P_{n}+\lambda_{-}^{-1}P_{n}^{\perp})x\Biggl),
\]
where $\lambda_{+}=\sigma^{2}(1+(n-1)\rho)$ and $\lambda_{-}=\sigma^{2}(1-\rho)$
are eigenvalues of $\Sigma$ with multiplicities $1$ and $n-1$,
respectively, and orthogonal projection matrices $P_{n}$ and $P_{n}^{\perp}$
are defined in Section \ref{sec:Measuring-Similarity-for}.

We adapt the spherical reparametrization for the equicorrelation model.
For this, we introduce the overall radius, $s=||x||$, and the angle
between $x$ and the unit vector of ones, $\cos\theta=\frac{\iota_{n}^{\prime}x}{\sqrt{n}||x||}$,
such that $s\geq0$ and $\theta\in[0,\pi)$. Therefore, $x^{\prime}P_{n}x=s^{2}\cos^{2}\theta$
and $x^{\prime}P_{n}^{\perp}x=s^{2}\sin^{2}\theta$. The remaining
$n-2$ angular variables $\upsilon\in\mathcal{S}_{n-2}$ have the
uniform distribution on the $(n-2)$-dimensional unit sphere, which
lies in the orthogonal sub-space to vector $\iota_{n}$. By integrating
out these angular variables, we can obtain the probability density
function of $s$ and $\theta$ only,
\[
\begin{aligned} & f_{s,\theta}(s,\theta) & = & \int_{\upsilon\in\mathcal{S}_{n-2}}s^{n-1}\sin^{n-2}\theta\cdot g\Biggl(s^{2}\cdot\Bigl(\frac{\cos^{2}\theta}{\lambda_{+}}+\frac{\sin^{2}\theta}{\lambda_{-}}\Bigl)\Biggl)d\upsilon\\
 &  & = & |\mathcal{S}_{n-2}|\cdot s^{n-1}\sin^{n-2}\theta\cdot g\Biggl(s^{2}\cdot\Bigl(\frac{\cos^{2}\theta}{\lambda_{+}}+\frac{\sin^{2}\theta}{\lambda_{-}}\Bigl)\Biggl),
\end{aligned}
\]
where $|\mathcal{S}_{n-2}|=2\pi^{\frac{n-1}{2}}\Gamma^{-1}\Bigl(\frac{n-1}{2}\Bigl)$
is the surface area of $\mathcal{S}_{n-2}$.

Let us introduce a new variable $t=s^{2}\cdot\Bigl(\frac{\cos^{2}\theta}{\lambda_{+}}+\frac{\sin^{2}\theta}{\lambda_{-}}\Bigl)$
and obtain the marginal density of $\theta$ by integrating $f_{s,\theta}(s,\theta)$
with respect to $s$,
\[
\begin{aligned} & f_{\theta}(\theta)=\int_{0}^{+\infty}f_{s,\theta}(s,\theta)ds & = & \mathcal{S}_{n-2}\cdot\frac{\sin^{n-2}\theta}{2}\cdot\Bigl(\frac{\cos^{2}\theta}{\lambda_{+}}+\frac{\sin^{2}\theta}{\lambda_{-}}\Bigl)^{-1}\cdot\int_{0}^{+\infty}s^{n-2}g(t)dt\\
 &  & = & \mathcal{S}_{n-2}\cdot\frac{\sin^{n-2}\theta}{2}\cdot\Bigl(\frac{\cos^{2}\theta}{\lambda_{+}}+\frac{\sin^{2}\theta}{\lambda_{-}}\Bigl)^{-\frac{n}{2}}\cdot\int_{0}^{+\infty}t^{\frac{n}{2}-1}g(t)dt\\
 &  & = & \mathcal{B}^{-1}\Bigl(\frac{1}{2},\frac{n-1}{2}\Bigl)\cdot\lambda_{+}^{-\frac{1}{2}}\lambda_{-}^{-\frac{n-1}{2}}\cdot\sin^{n-2}\theta\cdot\Bigl(\frac{\cos^{2}\theta}{\lambda_{+}}+\frac{\sin^{2}\theta}{\lambda_{-}}\Bigl)^{-\frac{n}{2}}\\
 &  & = & \mathcal{B}^{-1}\Bigl(\frac{1}{2},\frac{n-1}{2}\Bigl)\cdot\Bigl(\frac{\lambda_{-}}{\lambda_{+}}\Bigl)^{\frac{1}{2}}\cdot\frac{1}{\sin^{2}\theta}\cdot\Bigl(\frac{\lambda_{-}}{\lambda_{+}}\cdot\frac{\cos^{2}\theta}{\sin^{2}\theta}+1\Bigl)^{-\frac{n}{2}}
\end{aligned}
\]
where, in the third line, we use the result from Lemma \ref{lem:1},
with $|\Sigma|=\lambda_{+}\lambda_{-}^{n-1}$, and the beta function
$\mathcal{B}\Bigl(\frac{1}{2},\frac{n-1}{2}\Bigl)$ arises from
\[
\frac{1}{2}\pi^{-\frac{n}{2}}\Gamma\Bigl(\frac{n}{2}\Bigl)\cdot\mathcal{S}_{n-2}=\pi^{-\frac{1}{2}}\Gamma^{-1}\Bigl(\frac{n-1}{2}\Bigl)\Gamma\Bigl(\frac{n}{2}\Bigl)=\mathcal{B}^{-1}\Bigl(\frac{1}{2},\frac{n-1}{2}\Bigl).
\]
After we apply the transformation, $\phi_{r}=\frac{1}{n}\log\Bigl(\cot^{2}\theta\Bigl)$,
for $\theta\in[0,\frac{\pi}{2})$, and denote $\phi_{\rho}=\frac{1}{n}\log\Bigl(\frac{\lambda_{+}}{\lambda_{-}}\Bigl)$,
the density is written as follows, 
\[
\begin{aligned} & f_{\phi_{r}}(\phi_{r}) & = & 2\mathcal{B}^{-1}\Bigl(\frac{1}{2},\frac{n-1}{2}\Bigl)\cdot\frac{n}{2}\frac{1}{1+e^{n\phi_{r}}}e^{\frac{1}{2}n\phi_{r}}\cdot e^{-\frac{1}{2}n\phi_{\rho}}\cdot(1+e^{n\phi_{r}})\cdot\Bigl(1+e^{-n\phi_{\rho}}\cdot e^{n\phi_{r}}\Bigl)^{-\frac{n}{2}}\\
 &  & = & \mathcal{B}^{-1}\Bigl(\frac{1}{2},\frac{n-1}{2}\Bigl)\cdot ne^{\frac{1}{2}n(\phi_{r}-\phi_{\rho})}\cdot\Bigl(1+e^{n(\phi_{r}-\phi_{\rho})}\Bigl)^{-\frac{n}{2}},
\end{aligned}
\]
where we use $\frac{\partial\theta}{\partial\phi_{r}}=-\frac{n}{2}\frac{1}{1+e^{n\phi_{r}}}e^{\frac{1}{2}n\phi_{r}}$.
Since we analyzed the transformation for $\theta\in[0,\frac{\pi}{2})$,
but $\theta\in[0,\pi)$, an additional factor $2$ is used in order
to incorporate both symmetric hemispheres.

For the transformed variable $\varphi_{n}=-n(\phi_{r}-\phi_{\rho})$,
the resulting density represents the Logistic-Beta distribution which
belongs to the class of Generalized Logistic distributions. For this
distribution, the moment generating function is available,
\[
M_{\varphi_{n}}(t)=\frac{\Gamma\Bigl(\frac{n-1}{2}+t\Bigl)\Gamma\Bigl(\frac{1}{2}-t\Bigl)}{\Gamma\Bigl(\frac{n-1}{2}\Bigl)\Gamma\Bigl(\frac{1}{2}\Bigl)},
\]
and the moments can be straightforwardly obtained. Therefore, for
the first moment of $\phi_{r}$ we have $\mathbb{E}(\phi_{r})=\phi_{\rho}-\omega_{n}$
and the bias term is given by
\[
\omega_{n}=\frac{1}{n}\biggl(\psi\Bigl(\frac{n-1}{2}\Bigl)-\psi\Bigl(\frac{1}{2}\Bigl)\biggl)=\frac{1}{n}\psi\Bigl(\frac{n-1}{2}\Bigl)+\frac{1}{n}(\vartheta_{E}+2\log2),
\]
where $\psi(t)$ is the digamma function and $\vartheta_{E}$ is the
Euler--Mascheroni constant. The bias term $\omega_{n}$ is a non-negative
and non-monotone function of $n$ with $\omega_{2}=0$ and $\omega_{n}\sim\frac{\log n}{n}$
as $n\rightarrow\infty$. For the variance of $\phi_{r}$ we obtain
\[
\mathbb{V}(\phi_{r})=\frac{1}{n^{2}}\biggl(\psi^{\prime}\Bigl(\frac{n-1}{2}\Bigl)+\psi^{\prime}\Bigl(\frac{1}{2}\Bigl)\biggl)=\frac{1}{n^{2}}\biggl(\psi^{\prime}\Bigl(\frac{n-1}{2}\Bigl)+\frac{\pi^{2}}{2}\biggl),
\]
where $\psi^{\prime}(t)=\frac{d}{dt}\psi(t)$ is the trigamma function.
$\square$

\newpage{}
\begin{center}
\includegraphics[scale=0.4]{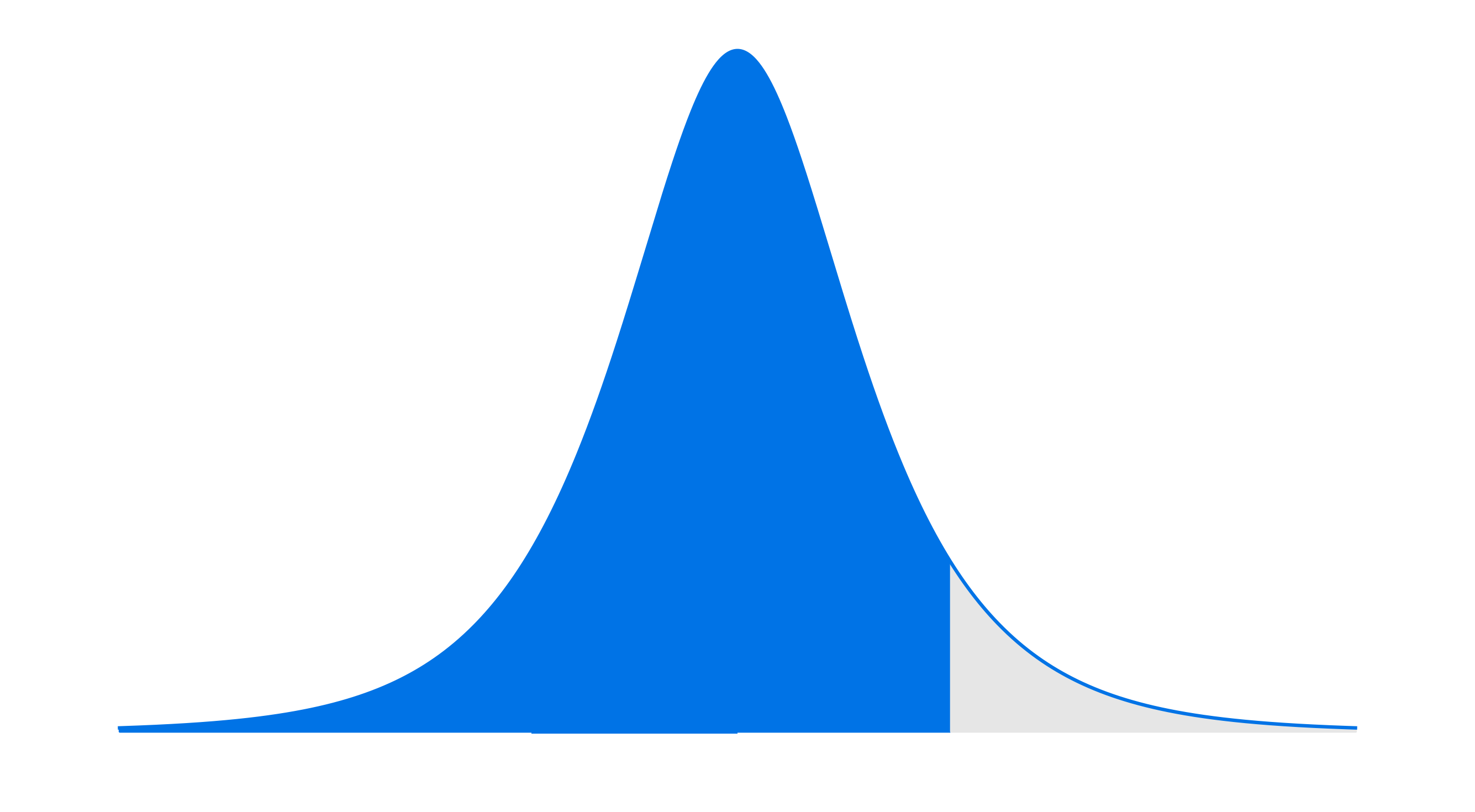}
\par\end{center}

\begin{table}[!h]
\begin{centering}
\bigskip{}

\scriptsize
\begin{tabularx}{\textwidth}{ >{\hsize=1.0\hsize}Y >{\hsize=1.0\hsize}Y  >{\hsize=1.0\hsize}Y  >{\hsize=1.0\hsize}Y  >{\hsize=1.0\hsize}Y  >{\hsize=1.0\hsize}Y  >{\hsize=1.0\hsize}Y  >{\hsize=1.0\hsize}Y  >{\hsize=1.0\hsize}Y  >{\hsize=1.0\hsize}Y  >{\hsize=1.0\hsize}Y  >{\hsize=1.0\hsize}Y  }
\\[-1.0cm]
\toprule
\midrule
  &  \multicolumn{10}{c}{Quantiles}  \\
\cmidrule(l){2-12}
  T  & \multicolumn{1}{r}{0.90}  & \multicolumn{1}{r}{0.95}  & \multicolumn{1}{r}{0.96}  & \multicolumn{1}{r}{0.97}  & \multicolumn{1}{r}{0.975}  & \multicolumn{1}{r}{0.98}  & \multicolumn{1}{r}{0.985}  & \multicolumn{1}{r}{0.99}  & \multicolumn{1}{r}{0.995}  & \multicolumn{1}{r}{0.9975}  & \multicolumn{1}{r}{0.9995}  \\
\midrule
\\[-0.1cm]1 & 1.1731 & 1.6183 & 1.7609 & 1.9444 & 2.0606 & 2.2028 & 2.3860 & 2.6442 & 3.0855 & 3.5268 & 4.5514 \\2 & 1.2210 & 1.6314 & 1.7585 & 1.9197 & 2.0205 & 2.1427 & 2.2984 & 2.5151 & 2.8793 & 3.2373 & 4.0516 \\3 & 1.2391 & 1.6353 & 1.7561 & 1.9082 & 2.0027 & 2.1168 & 2.2613 & 2.4609 & 2.7933 & 3.1168 & 3.8432 \\4 & 1.2488 & 1.6373 & 1.7547 & 1.9018 & 1.9929 & 2.1024 & 2.2407 & 2.4310 & 2.7456 & 3.0497 & 3.7264 \\5 & 1.2549 & 1.6386 & 1.7538 & 1.8978 & 1.9867 & 2.0934 & 2.2277 & 2.4119 & 2.7151 & 3.0067 & 3.6511 \\
\\[-0.12cm]6 & 1.2590 & 1.6394 & 1.7532 & 1.8950 & 1.9824 & 2.0871 & 2.2187 & 2.3987 & 2.6940 & 2.9768 & 3.5983 \\7 & 1.2621 & 1.6401 & 1.7528 & 1.8930 & 1.9793 & 2.0826 & 2.2122 & 2.3890 & 2.6784 & 2.9547 & 3.5592 \\8 & 1.2644 & 1.6406 & 1.7525 & 1.8915 & 1.9770 & 2.0791 & 2.2072 & 2.3817 & 2.6665 & 2.9377 & 3.5289 \\9 & 1.2662 & 1.6410 & 1.7522 & 1.8903 & 1.9751 & 2.0764 & 2.2032 & 2.3758 & 2.6571 & 2.9242 & 3.5049 \\10 & 1.2677 & 1.6414 & 1.7520 & 1.8894 & 1.9736 & 2.0742 & 2.2000 & 2.3711 & 2.6494 & 2.9133 & 3.4853 \\
\\[-0.12cm]11 & 1.2689 & 1.6416 & 1.7519 & 1.8886 & 1.9724 & 2.0724 & 2.1974 & 2.3672 & 2.6431 & 2.9042 & 3.4690 \\12 & 1.2699 & 1.6419 & 1.7518 & 1.8879 & 1.9714 & 2.0708 & 2.1952 & 2.3639 & 2.6377 & 2.8966 & 3.4552 \\13 & 1.2708 & 1.6421 & 1.7517 & 1.8874 & 1.9705 & 2.0696 & 2.1933 & 2.3611 & 2.6332 & 2.8901 & 3.4434 \\14 & 1.2715 & 1.6423 & 1.7516 & 1.8869 & 1.9698 & 2.0684 & 2.1917 & 2.3587 & 2.6292 & 2.8844 & 3.4332 \\15 & 1.2722 & 1.6424 & 1.7515 & 1.8865 & 1.9691 & 2.0675 & 2.1903 & 2.3566 & 2.6258 & 2.8795 & 3.4243 \\
\\[-0.12cm]16 & 1.2727 & 1.6426 & 1.7515 & 1.8861 & 1.9685 & 2.0666 & 2.1890 & 2.3548 & 2.6228 & 2.8752 & 3.4164 \\17 & 1.2732 & 1.6427 & 1.7514 & 1.8858 & 1.9680 & 2.0659 & 2.1880 & 2.3532 & 2.6201 & 2.8713 & 3.4094 \\18 & 1.2737 & 1.6428 & 1.7514 & 1.8855 & 1.9676 & 2.0652 & 2.1870 & 2.3517 & 2.6178 & 2.8679 & 3.4031 \\19 & 1.2741 & 1.6429 & 1.7513 & 1.8853 & 1.9672 & 2.0646 & 2.1861 & 2.3504 & 2.6156 & 2.8648 & 3.3975 \\20 & 1.2745 & 1.6430 & 1.7513 & 1.8851 & 1.9668 & 2.0641 & 2.1853 & 2.3492 & 2.6137 & 2.8620 & 3.3924 \\
\\[-0.12cm]21 & 1.2748 & 1.6431 & 1.7512 & 1.8849 & 1.9665 & 2.0636 & 2.1846 & 2.3482 & 2.6119 & 2.8595 & 3.3878 \\22 & 1.2751 & 1.6431 & 1.7512 & 1.8847 & 1.9662 & 2.0632 & 2.1840 & 2.3472 & 2.6103 & 2.8572 & 3.3836 \\23 & 1.2754 & 1.6432 & 1.7512 & 1.8845 & 1.9659 & 2.0627 & 2.1834 & 2.3463 & 2.6089 & 2.8551 & 3.3797 \\24 & 1.2756 & 1.6433 & 1.7512 & 1.8843 & 1.9657 & 2.0624 & 2.1828 & 2.3455 & 2.6075 & 2.8531 & 3.3761 \\25 & 1.2759 & 1.6433 & 1.7511 & 1.8842 & 1.9655 & 2.0620 & 2.1823 & 2.3447 & 2.6063 & 2.8514 & 3.3728 \\
\\[-0.12cm]30 & 1.2768 & 1.6436 & 1.7511 & 1.8836 & 1.9645 & 2.0607 & 2.1803 & 2.3417 & 2.6013 & 2.8441 & 3.3596 \\35 & 1.2775 & 1.6438 & 1.7510 & 1.8832 & 1.9639 & 2.0597 & 2.1789 & 2.3395 & 2.5977 & 2.8390 & 3.3500 \\40 & 1.2780 & 1.6439 & 1.7510 & 1.8829 & 1.9634 & 2.0589 & 2.1778 & 2.3379 & 2.5950 & 2.8350 & 3.3428 \\45 & 1.2784 & 1.6440 & 1.7509 & 1.8827 & 1.9630 & 2.0584 & 2.1769 & 2.3366 & 2.5929 & 2.8320 & 3.3371 \\50 & 1.2787 & 1.6441 & 1.7509 & 1.8825 & 1.9627 & 2.0579 & 2.1762 & 2.3356 & 2.5912 & 2.8295 & 3.3325 \\55 & 1.2789 & 1.6441 & 1.7509 & 1.8823 & 1.9625 & 2.0575 & 2.1757 & 2.3348 & 2.5899 & 2.8275 & 3.3288 \\
\\[-0.12cm]60 & 1.2792 & 1.6442 & 1.7509 & 1.8822 & 1.9623 & 2.0572 & 2.1752 & 2.3341 & 2.5887 & 2.8258 & 3.3257 \\70 & 1.2795 & 1.6443 & 1.7508 & 1.8820 & 1.9619 & 2.0567 & 2.1745 & 2.3330 & 2.5869 & 2.8232 & 3.3207 \\80 & 1.2798 & 1.6444 & 1.7508 & 1.8819 & 1.9617 & 2.0564 & 2.1739 & 2.3322 & 2.5855 & 2.8212 & 3.3170 \\90 & 1.2800 & 1.6444 & 1.7508 & 1.8817 & 1.9615 & 2.0561 & 2.1735 & 2.3315 & 2.5844 & 2.8196 & 3.3141 \\100 & 1.2801 & 1.6445 & 1.7508 & 1.8816 & 1.9613 & 2.0558 & 2.1732 & 2.3310 & 2.5836 & 2.8184 & 3.3118 \\
\\[-0.12cm]$N(0,1)$ & 1.2816 & 1.6449 & 1.7507 & 1.8808 & 1.9600 & 2.0537 & 2.1701 & 2.3263 & 2.5758 & 2.8070 & 3.2905 \\
\\[-0.2cm]
\midrule
\bottomrule
\end{tabularx}
\par\end{centering}
\caption{Critical values of $\frac{\sqrt{T}(\hat{\gamma}-\phi_{\rho})}{\pi/2}$
for the bivariate elliptical case with homogeneous scale parameters.\label{tab:quantiles_table}}
\end{table}

\end{document}